\newcommand{\R}{\mathbb{R}}
\newcommand{\N}{\mathbb{N}}
\newcommand{\Q}{\mathcal{Q}}
\newcommand{\eps}{\varepsilon}
\newcommand{\OPT}{\textnormal{OPT}}
\newcommand{\MIN}{\textnormal{MIN}}
\newcommand{\MST}{\textnormal{MST}}
\newcommand{\ST}{\textnormal{ST}}
\newcommand{\COST}{\textnormal{COST}}
\newcommand{\ceiling}[1]{\lceil #1 \rceil}
\newcommand{\C}{\mathscr{C}}
\newcommand{\Schwartz}{\mathscr{S}}
\newcommand{\PP}{\mathcal{P}}
\newcommand{\argmin}{\textnormal{argmin}}
\begin{document}
\allowdisplaybreaks

\title{Multi-Level Graph Sketches via Single-Level Solvers}

\author{Reyan Ahmed\thanks{Department of Computer Science, University of Arizona, Tucson, AZ 85721
(\email{abureyanahmed@email.arizona.edu, kobourov@cs.arizona.edu, faryad@cs.arizona.edu, rcspence@email.arizona.edu}).}
\and Keaton Hamm\thanks{Department of Mathematics, University of Arizona, Tucson, AZ 85721 (\email{hamm@math.arizona.edu},\email{mjlatifi@math.arizona.edu}.)}
\and Stephen Kobourov\footnotemark[1]
\and Mohammad Javad Latifi Jebelli\footnotemark[2]
\and Faryad Darabi Sahneh\footnotemark[1]
\and Richard Spence\footnotemark[1]
}

\maketitle
\begin{abstract}
 Given an undirected weighted graph $G(V,E)$, a constrained sketch over a terminal set $T\subset V$ is a subgraph $G'$ that connects the terminal vertices while satisfying a given set of constraints.  Examples include Steiner trees (preserving connectivity among $T$) and subsetwise spanners (preserving shortest path distances up to a stretch factor).  Multi-level constrained terminal sketches are generalizations in which terminal vertices require different levels or grades of service and each terminal pair is connected with edges of at least the minimum required level of the two terminals.
 
 This paper gives a flexible procedure for computing a broad class of multi-level graph sketches, which encompasses multi-level graph spanners, Steiner trees, and $k$--connected subgraphs as a few special cases. The proposed procedure is modular, i.e., it relies on availability of a single-level solver module (be it an oracle or approximation algorithm). One key result is that an $\ell$--level constrained terminal sketch can be computed with $\log\ell$ queries of the solver module while producing feasible solutions with approximation guarantees independent of $\ell$.
 
 Additionally, a new polynomial time algorithm for computing a subsetwise spanner is proposed.  We show that for $k\in\N$, $\eps>0$, and $T\subset V$, there is a subsetwise $(2k-1)(1+\eps)$--spanner with total weight $O(|T|^\frac1kW(\ST(G,T)))$, where $W(\ST(G,T))$ is the weight of the Steiner tree of $G$ over the subset $T$.  This is the first algorithm and weight guarantee for a multiplicative subsetwise spanner for nonplanar graphs.   Numerical experiments are also done to illustrate the performance of the proposed algorithms.
\end{abstract}

\section{Introduction}

Graph sketching has become a fundamental technique for analysis and visualization of large graphs.  Here, a \textit{sketch} of a large graph on $n$ vertices is a subgraph which is much sparser than the original graph either in terms of the number of edges if the graph is unweighted or the total weight if the graph is weighted. Classic sketches include minimum spanning trees, Steiner trees, and spanners.  These objects have been used as tools in a variety of tasks including network routing~\cite{abraham2018ramsey, roditty2002roundtrip, TZ01, censor2018sparsest,elkin2014light}, computational biology~\cite{BANDELT1986309}, synchronizers~\cite{awerbuch1985complexity}, broadcasting~\cite{Peleg2000Distributed}, wireless sensor networks~\cite{5061918}, robotics control optimization~\cite{DBLP:journals/ijcga/CaiK97}, and computing shortest path distances~\cite{sommer2014shortest}.

\subsection{Constrained Terminal Graph Sketches}

Often, one is prescribed a set of \textit{terminals}, which are important designated vertices of the input graph. For example, in telecommunications networks, terminals may be high-traffic cell towers; in airline flight networks, terminals may be large hub airports for a given airline.  

Given a connected, undirected graph $G(V,E)$ and a set of terminals $T\subset V$, a \textit{terminal sketch of $G$ over $T$} is a connected subgraph $G'(V',E')$ which spans $T$; i.e., $T\subset V'$.  Often in applications, one desires not just a subgraph which spans the terminals, but rather one that satisfies a set of constraints; we will call these \textit{constrained terminal sketches}.  For example, a \textit{Steiner tree} of $G$ over the terminals $T$ has the single constraint that $G'$ must be a tree.  More sophisticated constrained terminal sketches require that the subgraph maintain some structural feature of $G$; for example, a \textit{distance preserver} over $T$ is a subgraph $G'$ spanning $T$ such that pairwise distances in $G'$ between terminals are the same as the pairwise distances in $G$ between terminals.  

For notational purposes, given a graph $G$, let $\PP(V)$ denote the power set of $V$, i.e., the set of all possible subsets of vertices, and via a minor abuse of notation $\PP(G)$ will be the set of all connected subgraphs of $G$. We will assign each edge $e \in E$ a \emph{distance} $d(e) \in \R_+$ as well as a \emph{cost} $c(e)\in\R_+$ ($\R_+$ is the set of positive real numbers), and always assume $G$ is connected and undirected.

\begin{definition}[Constrained Terminal Sketch]\label{DEF:CTerminalSketch}
Given a graph $G(V,E)$, a set of terminals $T\subset V$, a distance function $d: E \to \R_+$, a cost function $c: E \to \R_+$, and a set of constraints $\C$, a subgraph $G'(V',E')\in\PP(G)$ is a \textit{$\C$--terminal sketch of $G$ over $T$} provided $T\subset V'$ and $G'$ satisfies the constraints $\C$ (recall that by definition of $\PP(G)$, $G'$ must also be connected).

Moreover, define
\[\Schwartz(G,T,\C):=\{G'\in\PP(G): G' \textnormal{ is a } \C\textnormal{--terminal sketch of } G \textnormal{ over } T\}. \]           
\end{definition}
The Steiner tree problem is the case where $\C$ requires that $G'$ is a tree. Since $\C$ is an arbitrary constraint set, we should choose $\C$ carefully; if $\C$ is the constraint ``$G'$ has fewer than $|T|-1$ edges'', then no solution exists as $G'$ should also be connected, and $|\Schwartz(G,T,\C)|=0$. Likewise, it is possible for $\C$--terminal sketches of $G$ to exist for some subsets $T$ of terminals but not others (e.g., $\C$ is the constraint ``$G'$ has fewer than 17 edges''). Here, we stipulate that given constraints $\C$, there exists a $\C$--terminal sketch of $G$ over $T$ for all subsets $T \subset V$.

\subsection{Constrained Terminal Graph Sketching Problems}

There are two natural problems to consider for graph sketches. Given an input $(G, T, \C)$, one can seek to solve the decision problem of whether there exists a subgraph $G'$ which spans $T$ and satisfies constraints $\C$.  Likewise, one may attempt to solve the optimization problem of finding the sparsest terminal sketch $G'$.  Most variants of these problems are known to be NP--complete and NP--hard, respectively. Indeed, the Steiner tree problem was one of Karp's original list of 21 NP--complete problems \cite{Karp1972}.  This article primarily focuses on the optimization problem which we state formally here.

\begin{problem}[Minimum Cost Constrained Terminal Sketch problem]\label{PROB:ConstrainedSketch}
Given as input $( G, T, c, d, \C)$, find a minimum-cost connected subgraph $G'(V',E')$ which spans $T$ and satisfies constraints $\C$.  That is, solve 
\[\underset{G'\in\Schwartz(G,T,\C)}\argmin\; \COST(G')\; := \underset{G'\in\Schwartz(G,T,\C)}\argmin\; \sum_{e\in E'}c(e).\]
\end{problem}

\Cref{PROB:ConstrainedSketch} is very generic as all of the aforementioned terminal sketches can be formulated as an instance of this problem. Here, we will use both cost and distance in the multi-level formulation, so we prefer cost to be that part of the objective function which is minimized.  Note that some problems (like computing Steiner trees) do not use the distance, whereas others do (like spanners discussed momentarily).

\subsection{Multi-level Constrained Terminal Sketches}

Multi-level graph representations have long been used for visualization of complex networks. For example, map applications implicitly use such a representation, as the user is allowed to zoom in which adds vertices (intersections) and edges (roads) at a finer scale or to zoom out which takes out less important connections.  Take as an example a map of Arizona (\cref{FIG:Arizona}): at the highest zoom level when one can see the entire state, one will mainly see Interstates and other major highways; zooming in on Tucson, major city streets begin to appear, and at a fine zoom level of University of Arizona's campus, one can see small roads and pedestrian walkways.  One could do a similar representation on a map of collaboration networks as in \cite{fried2014maps}.

\begin{figure}[h!]
    \centering
    \includegraphics[width=0.3\textwidth]{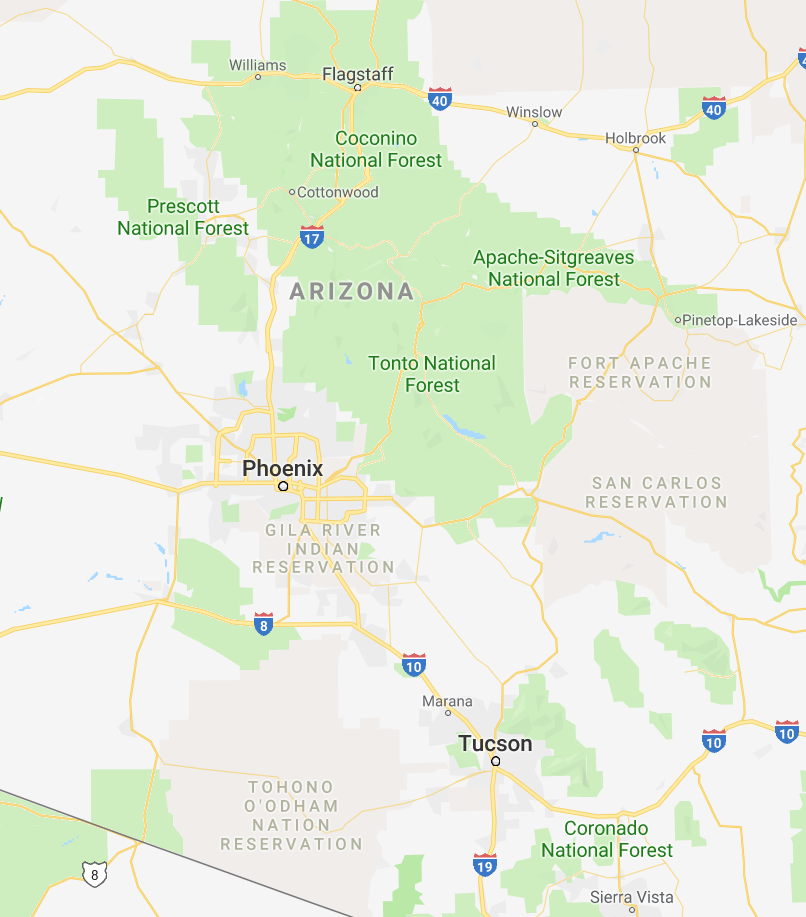}\;\;
    \includegraphics[width=0.28\textwidth]{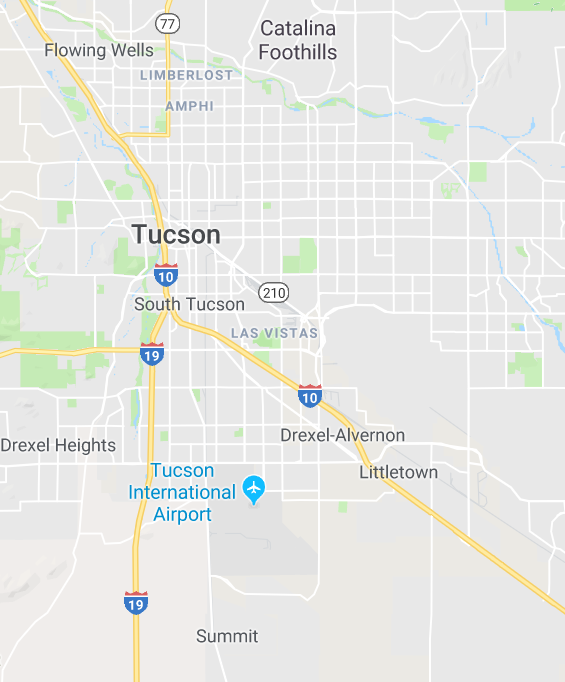}
    \;\;\includegraphics[width=0.3\textwidth]{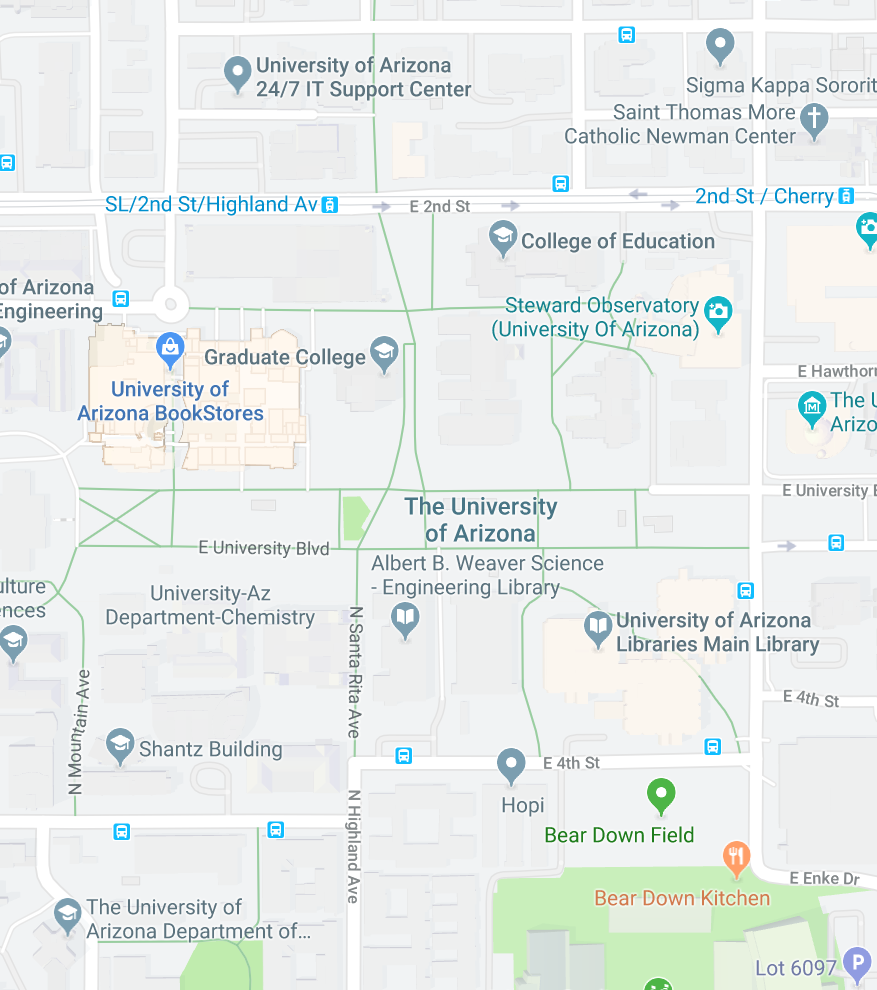}
    \caption{Three different zoom levels of a map of Arizona (\emph{Map data: Google}).}
    \label{FIG:Arizona}
\end{figure}

Here, we will consider multi-level constrained terminal sketches defined as follows. We denote $[\ell]=\{1,\dots,\ell\}$, and use $G_i\subset G_{i-1}$ to denote that $V_i\subset V_{i-1}$ and $E_i\subset E_{i-1}$.

\begin{definition}[Multi-level Constrained Terminal Sketch]\label{DEF:MultilevelConstrainedSketch}
Given an input graph $G$ which is to be sketched on $\ell$ levels, $\ell$ nested terminal sets $T_\ell\subset\dots\subset T_1\subset V$, and a set of constraints $\C$ (which will be the same for all levels), a sequence of graphs $G_\ell\subset\dots\subset G_1\subset G$ is a \emph{multi-level $\C$--terminal sketch of $G$ over $\{T_i\}_{i=1}^\ell$}  if $G_i\in \Schwartz(G,T_i,\C)$ for all $i\in[\ell]$.
\end{definition}

Thus for a multi-level sketch, we require that the subgraphs $G_i$ on each level are a $\C$--terminal sketch of the full graph $G$ over $T_i$.  Note that this does not require that $G_i$ is a terminal sketch of $G_{i-1}$ over $T_{i}$. 

It is natural that if an edge $e$ appears on multiple levels, then it should cost more than if it had appeared on a single level.  Thus, one possibility is that the \textit{cost} of an edge $e$ could be its base cost $c(1,e)$ times the number of levels it appears on.  We formulate the multi-level version of \cref{PROB:ConstrainedSketch} to include the most general definition of cost, but we will also consider the case with uniform costs. 

\begin{problem}[Minimum Cost Multi-level Constrained Terminal Sketch Problem (MLCTS)]\label{PROB:Multilevel}
Given a graph $G(V,E)$ which is to be sketched on $\ell$ levels, nested terminal sets $T_\ell\subset\dots\subset T_1\subset V$, a distance function $d:E \to \R_+$, set of constraints $\C$, and a level/edge cost function $c: [\ell]\times E\to \R_+ $, find a multi-level $\C$--terminal sketch of $G$ over $\{T_i\}_{i=1}^\ell$ whose total cost is minimized.  That is, if $y(e)$ is the largest level on which an edge $e$ appears in $\{G_i\}$, solve
\[
    \underset{\substack{G_\ell\subset\dots\subset G_1\\ G_i\in\Schwartz(G,T_i,\C)}}{\argmin}\; \textnormal{COST}(G_\ell,\dots,G_1) :=\underset{\substack{G_\ell\subset\dots\subset G_1\\ G_i\in\Schwartz(G,T_i,\C)}}{\argmin}\;\sum_{e\in E} c(y(e),e).
\] 
\end{problem}

The cost function $c(i,e)$ in \cref{PROB:Multilevel} represents the cost of including edge $e$ on levels $1, \ldots, i$ with the convention $c(0,e) = 0$. For example, if edge $e$ has $(c(1,e), c(2,e), c(3,e)) = (1,3,5)$, $e \in G_1, G_2$, and $e \not\in G_3$, then we pay a cost of 3 for including edge $e$. For a fixed edge, $c(i+1,e)-c(i,e)$ can be thought of as the cost of ``upgrading'' edge $e$ from level $i$ to level $i+1$.

An equivalent formulation of the MLCTS problem is in terms of \emph{grades of service}.  Consider a function $R: V \to \{0,1,\ldots,\ell\}$ such that $R(v)$ indicates the required grade of service of $v$ (equivalently, the maximum $i$ for which $v \in T_i$). The MLCTS problem can be equivalently stated as follows: compute a single subgraph $H \subset G$ by including edges $e$ of varying grade of service $y(e)$ such that, for all $i \in [\ell]$, the subgraph induced by the edge set $\{e: y(e) \ge i\}$ is a $\C$-terminal sketch of $G$ over $T_i = \{v: R(v) \ge i\}$.

In the sequel, we will consider the important special case of \emph{uniform} edge costs, where the cost of all edges uniformly scale across levels:

\begin{definition}\label{def:uniform}
A level/edge cost function $c:[\ell] \times E \to \R_+$ is \emph{uniform} if, for all $i \in [\ell]$ and $e \in E$, we have $c(i,e) = g(i) \cdot c(1,e)$, where $g(i)$ is a non-decreasing function with $g(1)=1$.
\end{definition}
We will use the phrase ``uniform edge costs'' when specifying an instance whose cost function is uniform.

\subsection{Merging of Single-Level Sketches}

One of the main focuses of this work is to determine how effective it is to solve single-level constrained terminal sketch problems to approximate the solution to the MLCTS problem; i.e., how well does solving \cref{PROB:ConstrainedSketch} for a subset of levels and combining these solutions approximate the solution to \cref{PROB:Multilevel}?  To achieve this, we need a notion of how to combine constrained terminal sketches.  We thus require that the set of constraints $\C$ admit a \textit{merging operator} such that
\[ \bigoplus:\Schwartz(G,T_1,\C)\times \Schwartz(G,T_2,\C)\to\Schwartz(G,T_1,\C) \]
for every $T_2\subset T_1\subset V$, and moreover, that $G_2\subset G_1\oplus G_2\subset G_1\cup G_2$.

\subsection{Examples of Constraints and Merging Operators}

If $\C$ is the single constraint ``is a tree", then \cref{PROB:ConstrainedSketch} is the \textit{Steiner tree} problem (ST). The multi-level version (\cref{PROB:Multilevel}) has been studied previously under various names including multi-level Steiner tree, priority Steiner tree, quality-of-service multicast tree; refer to \Cref{subsection:related} for related results. Note that if $\C=\emptyset$, then one still obtains the Steiner tree as the solution to \cref{PROB:ConstrainedSketch} due to the fact that $G'$ must be connected, and the Steiner tree is the minimum-cost connected subgraph which spans the prescribed terminals; however, this case does not constrain the search space to satisfy \cref{DEF:CTerminalSketch} at all. If $T=V$, the tree constraint is equivalent to finding a minimum spanning tree (MST) of $G$.

Note that the union of two Steiner trees need not be a tree. Hence the merging operator in this case requires some thought.  If $G_2$ and $G_1$ are Steiner trees over nested terminals $T_2\subset T_1$, a  merging operator can take the union $G_1\cup G_2$, then prune edges from $E_1 \setminus E_2$ to eliminate cycles. Note that $G_1\oplus G_2$ is not necessarily the minimum Steiner tree of $G$ over $T_1$; however, it is a candidate solution to \cref{PROB:Multilevel}.

If $\C$ requires that shortest path distances in $G'$ are preserved between terminals, i.e., $d_{G'}(u,v) = d_G(u,v)$ for $u,v\in T$. where $d_G(u,v)$ is the length of the shortest $u$--$v$ path in $G$ using the edge distances $d(\cdot)$, then \cref{PROB:ConstrainedSketch} is equivalent to finding a minimum cost \textit{terminal distance preserver}. 

More generally, $\C$ could require that $G'$ is a \textit{terminal (or subsetwise) spanner}. A spanner has as a parameter a distortion function $f:\R_+\to\R_+$ which satisfies $f(x)\geq x$, and the constraint is that $d_{G'}(u,v) \leq f(d_G(u,v))$ for $u,v\in T$.  Typical choices for distortion functions are $f(x) = tx$ (multiplicative $t$--spanner), $f(x)=x+\beta$ (additive $\beta$--spanner), or $f(x) = \alpha x+\beta$ (linear, or ($\alpha,\beta$)--spanner); see \cite{spannersurvey} for a survey of spanners.  For this choice of constraint $\C$, we call the problem the Multi-Level Graph Spanner problem (MLGS).  For clarity, the precise problem statement is as follows.

\begin{definition}\label{DEF:MLGS}[Multi-level Graph Spanner (MLGS) problem]
Given a graph $G(V,E)$ which is to be sketched on $\ell$ levels, nested terminal sets $T_\ell\subset\dots\subset T_1\subset V$, a distance function $d:E \to \R_+$, a spanner distortion function $f:\R_+\to\R_+$ satisfying $f(x)\geq x$, and a level/edge cost function $c: [\ell]\times E\to \R_+$, compute a minimum-cost sequence of spanners $G_{\ell} \subset \dots \subset G_1$, where $G_i$ is a terminal (subsetwise) spanner of $G$ over $T_i$ with distortion $f$ for $i\in[\ell]$.  That is, $d_{G_i}(u,v)\leq f(d_G(u,v))$ for all $u,v\in T_i$. The cost of a solution is defined as before: $\sum_{e\in E} c(y(e),e)$.
\end{definition}

In the case of terminal spanners with distortion $f$, a natural merging operation is $\bigoplus = \cup$, i.e., $G_1\bigoplus G_2 = (V_1\cup V_2,E_1\cup E_2)$.  It is straightforward to verify that $G_1 \bigoplus G_2$ is a terminal spanner over $T_1$ with distortion $f$.

Alternatively one could require that $G'$ be a spanner as well as constraining that $G'$ has diameter at most $D$ for some prescribed $D>0$ (termed \emph{low-diameter spanners}).  Here, the merging operation is the same as for spanners, i.e., to take union.

Another possible constraint is that $G'$ be a $k$--connected subgraph, which means that it must have the property that every pair of vertices in $T$ is connected in $G'$ by at least $k$ vertex-disjoint paths.   In this case, we have $\oplus=\cup$ as in the spanner case.

\subsection{Our Contributions}

For more precise statements of the main results, see \Cref{SEC:MainResults}; here we briefly explain the main contributions of this article.

We provide an overarching framework for computing multi-level constrained terminal sketches given access to a single-level solver (either an approximation algorithm or oracle).  Through this, we demonstrate upper bounds for the best approximations to the multi-level sketching problem (\cref{PROB:Multilevel}) via solutions to the single-level \cref{PROB:ConstrainedSketch}.  In particular, we show that with uniform edge costs, the MLCTS admits a 4-approximation if one is given access to an oracle which computes the corresponding minimum-cost constrained (single-level) terminal sketch.

Additionally, we provide a simple single-level approximation algorithm for the terminal (subsetwise) spanner problem and give upper bounds for the weight of the derived spanner which are state-of-the-art in terms of the tradeoff between lightness and time.  Utilizing this construction, we also give the first weight bounds for multi-level graph spanners.  As an additional conceptual contribution, our weight bounds for subsetwise spanners suggest a new definition of lightness of them in terms of the weight of the minimum-weight Steiner tree over the subset.

\subsection{Related Work}\label{subsection:related}

Multi-level or priority Steiner trees have been studied in \cite{MLST2018, 1288137, Chuzhoy2008}. Charikar et al.\cite{1288137} show that this problem with uniform edge costs admits an $O(1)$--approximation, and a $\min\{2 \ln|T_1|+2, \ell \alpha_{ST}\}$--approximation for nonuniform edge costs where $T_1\subset V$ is the set of terminals on level 1, and $\alpha_{ST} \approx 1.39$~\cite{Byrka2013} is an approximation ratio for the Steiner tree problem (by nonuniform costs here, we mean a cost function of the form $c(i,e)$ which cannot be factored as $c(i,e)=g(i)c(1,e)$). Chuzhoy et al.~\cite{Chuzhoy2008} show that this problem cannot be approximated with ratio $O(\log\log n)$ in polynomial time unless NP$\,\subseteq\,$DTIME$(n^{O(\log\log\log n)})$. The study of multi-level spanners was initiated in \cite{MLGS_proceeding}, but only achieved level-dependent bounds on the approximation algorithms for \cref{PROB:Multilevel}.  

The literature on spanners is vast, and we refer the reader to the survey \cite{spannersurvey} for an extensive reference list.  We highlight here that subsetwise distance preservers have been studied by Coppersmith and Elkin~\cite{coppersmith2006sparse} who showed that given an undirected weighted graph and a subset of size $O(n^\frac{1}{4})$, one can construct a linear size preserver in polynomial time. Cygan et al.~\cite{Cygan13} give polynomial time algorithms to compute subsetwise and pairwise additive spanners for unweighted graphs and show that there exists an additive pairwise spanner of size $O(n|T|^\frac{1}{2}{\log n}^{\frac{1}{4}})$ with $4 \log n$ additive stretch. They also show how to construct $O(n|T|^\frac{1}{2})$ size subsetwise additive 2--spanners. Abboud and Bodwin~\cite{Abboud16} improved that result by showing how to construct $O(n|T|^\frac{2}{3})$ size subsetwise additive 2--spanners. Kavitha~\cite{Kavitha2017}  shows that there is a polynomial time algorithm which constructs subsetwise spanners of size $\tilde{O}(n|T|^\frac{4}{7})$ and $O(n|T|^\frac{1}{2})$ for additive stretch 4 and 6, respectively. Kavitha~\cite{Kavitha2017} also shows that there exists an algorithm which computes a subsetwise $(1+\eps,4)$--spanner of size $O\left(n\sqrt{\frac{|T|\log n}{\eps}}\right)$. Bodwin and Williams~\cite{Bodwin:2016:BDP:2884435.2884496} give an upper bound on the size of subsetwise spanners with polynomial additive stretch factor. 

To the authors' knowledge, there are no existing guarantees in the literature for multiplicative subsetwise spanners except those of Klein~\cite{Klein06} who gives a polynomial time algorithm that computes a subsetwise multiplicative spanner of an edge weighted planar graph for a constant stretch factor with constant approximation ratio. On the other hand, for general graphs, there is a folklore upper bound of $O(|T|^2)$ on the size of multiplicative subsetwise spanners in the case that $|T|$ is polynomially larger than $\sqrt{n}$; the construction uses the algorithm of Elkin and Peleg~\cite{Elkin:2004:SCG:976327.984900}.  That this upper bound is relatively small indicates one reason for the lack of extensive study of subsetwise spanners as opposed to more general pairwise spanners; however, our simple \cref{ALG:subsetwise} yields new bounds on the size of subsetwise spanners in terms of the weight of the Steiner tree over the subset.

The hardness of approximation of multi-level spanners follows from the single level case. Peleg and Sch\"{a}ffer~\cite{doi:10.1002/jgt.3190130114} show that determining if there exists a $t$--spanner of $G$ with $m$ or fewer edges is NP--complete. Further, it is NP--hard to approximate the (unweighted) $t$--spanner problem for $t>2$ to within a factor of $O(\log n)$ even when restricted to bipartite graphs~\cite{Kortsarz2001}. Recently, Dinitz et al.~\cite{dinitz2016label} proved the super-logarithmic hardness of the basic $t$--spanner problem by showing that for every integer $k \geq 3$ and every constant $\epsilon >0$ it is hard to approximate an optimal $t$--spanner within a factor better than $2^{\log^{1-\epsilon}n/k}$.

\subsection{Layout}

\Cref{SEC:Notation} contains the notations and assumptions used throughout the sequel. The main results are stated in \Cref{SEC:MainResults}, and  \Cref{Sec:SingleLevel} discusses how single-level solvers can be used to give approximation algorithms for the MLCTS problem.  \Cref{SEC:Corollaries} contains the results for specific cases of constraints, with a focus on the terminal spanner problem.  \Cref{SEC:MultiLevelCorollaries} illustrates several multi-level results obtained using various single-level solvers from the literature.  Some numerical experiments for multi-level spanners are given in \Cref{SEC:Experiments}, and \Cref{SEC:Conclusion} is a brief conclusion to the paper.

\section{Notation and Assumptions}\label{SEC:Notation}

All graphs $G(V,E)$ are connected, undirected, and have a given distance function $d:E\to\R_+$ associated with them. The distance function here can be thought of as analogous to weights on the edges, but we use ``distance'' to better distinguish from the cost function on the edges. For $u,v\in V$, $d_G(u,v)$ denotes the length of the shortest $u$--$v$ path using distance function $d$.  Terminal sets will be denoted by $T\subset V$, and in the multi-level graph problems, we consider a generic cost function $c:[\ell]\times E\to\R_+$ to describe the cost of including edge $e$ on a given level and all levels below it.  

We denote by $\ST(G,T)$ a minimum Steiner tree of $G$ over terminals $T$.  $\C$ denotes a set of constraints in \cref{PROB:ConstrainedSketch} and subsequent problems.  For multiplicative and additive terminal spanners, we typically write terminal multiplicative $t$--spanner or terminal additive $\beta$--spanner (substituting other expressions for $t$ and $\beta$ frequently).

Lastly, we assume that given $G$ and $\C$, there exists a $\C$-terminal sketch of $G$ over any terminal set $T$, i.e., $\Schwartz(G, T, \C) \neq \emptyset$ for all $T \subset V$.

\section{Main Results}\label{SEC:MainResults}

Here we outline the main results presented in the paper.

\subsection{Multi-Level Sketches}

Our first result pertains to what happens when one has access to an oracle to solve the single-level constrained terminal sketch problem.

\begin{theorem}[cf. \cref{THM:RoundingUp,COR:OracleRounding}]\label{THM:Oracle}
Let $\C$ be a fixed set of constraints, $\ell\in\N$, and let $G(V,E)$ and $T_\ell\subset\dots\subset T_1\subset V$ be given, and assume the cost function $c:[\ell]\times E\to\R_+$ is uniform.  If an oracle is used to solve the minimum cost single-level constrained terminal sketch problem, then with at most $\log\ell$ queries to the oracle, an admissible solution to the MLCTS problem can be found whose cost is a constant (independent of $\ell$) away from that of the optimal solution.
\end{theorem}

Of course one drawback of \cref{THM:Oracle} is that using an ILP as the oracle could well require exponential time; however, this indicates that in general, the multi-level sketching problem is not significantly harder than the single-level one.

\subsection{Single-Level Spanners}

In speaking of using single-level solvers, when the constraint is that the subgraph be a spanner, our framework necessitates a solution to the \textit{terminal (subsetwise) spanner problem}.  Precious few algorithms are available to solve the subsetwise spanner problem, and so we propose new results for finding them in the single-level case.  The following is a sample of the results we obtain in \Cref{SEC:Corollaries} (to match the spanner literature, here we assume the cost of an edge is its weight since there is only a single level).

\begin{theorem}[cf. \cref{THM:subsetwise1}]\label{THM:Introsubsetwise1}
Let $G(V,E)$ be weighted, $k\in\N$, $\eps>0$, and $T\subset V$ be given.  There exists a polynomial time algorithm which runs in $O(|T|^{2+\frac1k})$ time and yields a terminal (subsetwise) multiplicative $(2k-1)(1+\eps)$--spanner $G'(V',E')$ of $G$ over $T$ satisfying
\[W(G')= O(|T|^\frac1k)  W(\ST(G,T)).\]
Moreover, the output of the algorithm satisfies
\[W(G') = O(|T|^\frac1k)\OPT \]
where $\OPT$ is the minimum weight of any $(2k-1)(1+\eps)$--spanner of $G$.
\end{theorem}

Even in the case $T=V$, this result is new and state-of-the-art in terms of both running time and the tradeoff between the stretch factor and the weight of the spanner.  The above theorem gives a new, more realistic notion of \textit{lightness} of terminal spanners as it is expressed in terms of the weight of the Steiner tree over the subset $T$ (bounds in the case $T=V$ are in terms of the minimum-weight spanning tree, which is the Steiner tree in that case).  
Moreover, this weight bound is for arbitrary graphs, whereas the only previous bound of this type was for planar graphs~\cite{Klein06}.

\subsection{Multi-Level Spanners}

Combining the results of \cref{THM:Oracle,THM:Introsubsetwise1}, we have the following theorem for multi-level graph spanners.

\begin{theorem}[cf. \cref{COR:MLMultSpanner}]
Let $G(V,E)$, $k\in\N$, $\eps>0$, and $T_\ell\subset\dots\subset T_1\subset V$ be given, and assume the cost function $c:[\ell]\times E\to\R_+$ is uniform. There exists a polynomial time algorithm which runs in $O(|T_1|^{2+\frac1k+\eps})$ time and yields an $O(|T_1|^\frac1k)$--approximation to the MLCTS problem when the constraint requires computing a multiplicative $(2k-1)(1+\eps)$--spanner.
\end{theorem}

\section{Solving on a Subset of Levels}\label{Sec:SingleLevel}

Here, we analyze how to find an approximation to the optimal solution to the MLCTS problem when using single level solvers; for simplicity we will call these multi-level sketches.  We first suppose that we may solve (exactly or approximately) the problem on any subset of levels, and we discuss how one can determine \textit{which} levels one should use.

\subsection{Rounding Up Single-Level Solutions}\label{SEC:Quantization}

We begin by discussing solutions obtained from \textit{promoting}, or \textit{rounding} solutions from a given level up to a higher level.  This is similar to the approach of \cite{MLST2018} for Steiner trees and is based on a rounding algorithm of Charikar et al.~\cite{1288137}.  The analysis here shows that a similar scheme works for arbitrary constrained sketches we consider here as well as for more general cost functions.

A simple case of the algorithm may be summarized as follows: compute single-level sketches on dyadic levels ($1, 2, 4, \dots, 2^{\ceiling{\log_2\ell}}$), and carry the solutions up on intermediate levels; for instance, the sketch from level 4 is used as the sketch on levels 5, 6, and 7. These subgraphs are then merged in descending order from the top down to provide a feasible multi-level solution to the MLCTS problem.  This is the algorithm used for multi-level Steiner trees in \cite{MLST2018}, but the analysis there does not depend on the computed subgraphs being trees, so we can immediately state that the MLCTS problem over constraints $\C$ and uniform edge costs admits a 4-approximation if one is given access to an oracle to solve the corresponding \cref{PROB:ConstrainedSketch}.




We assume the cost of an edge $e$ on a given level $i$ can be decoupled as $c(i,e) = g(i) c(1,e)$ for all $e$, where $g:[\ell]\to\R_+$ is a cost scaling function.  For example, if $g(i) = i$, then the cost function $c(i,e)$ is linear.  As we consider here the case that $g$ is an arbitrary nondecreasing function, the cost model is more general than the special case $g(i)=i$ considered in~\cite{MLGS_proceeding}.

Here, we consider a general rounding set $\Q=\{1=i_1,\dots,i_m\}\subset[\ell]$. For each $i_j\in\mathcal{Q}$, compute an admissible sketch over terminals $T_{i_j}$, and let $H_{i_1}, \dots, H_{i_m}$ be the subgraphs returned. Then for $i=1,\dots,\ell$ we will set $G_i$ as follows:

\begin{equation}\label{eq:g_i}
G_i = \bigoplus_{j=k}^m H_{i_j},\quad i_k\leq i<i_{k+1}. 
\end{equation}

\noindent where $i_{m+1}:=\ell+1$. Furthermore, we use the convention that $\oplus_{j=1}^n H_j$ merges in descending order, e.g., $\oplus_{j=1}^3 H_j = H_1\oplus (H_2\oplus H_3)$. In other words, the graph on level $i$ where $i_k\leq i<i_{k+1}$, is the {\em merging} (in the sense of the operator $\oplus$) of all computed sketches $H_{i_j}$ on higher levels, as well as the computed sketch $H_{i_k}$. For example, if $\ell=6$ and $\mathcal{Q} = \{1,4,6\}$, then $G_6 = H_6$, $G_4 = G_5 = H_4 \oplus H_6$, and $G_1 = G_2 = G_3 = H_1 \oplus (H_4 \oplus H_6)$. 

\begin{algorithm}[h!]
\caption{Rounding Approximation Algorithm for MLCTS($G,T_1,\dots,T_\ell,\C,\mathcal{Q}$)} \label{ALG:composite}
\begin{algorithmic}
\For{$j=1,\dots,m$}
   \State $H_{i_j} \gets$ an $s(j)$--approximation to the minimum-cost $\C$--terminal sketch of $G$ over $T_{i_j}$ 
\EndFor
\For{$i=1,\dots,\ell$}
    
    \State $\displaystyle G_i = \bigoplus_{j=k}^m H_{i_j},\quad i_k\leq i<i_{k+1}$ (as in \cref{eq:g_i})
    
\EndFor
\State\Return $G_1,\dots,G_\ell$
\end{algorithmic}
\end{algorithm}

To analyze the approximation guarantees of \cref{ALG:composite}, suppose that $A=A(g,\Q)\geq1$ is the minimal possible constant such that
\begin{equation}\label{EQN:Gamma}
g(i_{k+1}-1)\leq Ag(i_k),\quad k\in[m],
\end{equation} 
and let $B=B(g,\Q)\geq1$ be the minimal constant such that 
\begin{equation}\label{EQN:q}
\sum_{j=1}^k g(i_j)\leq B g(i_k),\quad k\in[m].
\end{equation}
Note that the inequality \cref{EQN:q} is equivalent to $\sum_{j=1}^kc(i_j,e)\leq Bc(i_k,e)$ via the assumption on the form of $c$.  Note this is no restriction as such constants always exist -- indeed one may take $A:=\max_{k\in[m]}g(i_{k+1}-1)/g(i_k)$ and $B:=\max_{k\in[m]}\sum_{j=1}^k g(i_j)/g(i_k)$.  Do note however that if the level cost function $g$ increases rapidly, then $A$ can be quite large.

\begin{theorem}\label{THM:RoundingUp}
Given a graph $G(V,E)$, terminal sets $T_\ell\subset\dots\subset T_1\subset V$, a cost function $c:[\ell]\times E\to\R_+$ of the form $c(i,e)=g(i)w(e)$ for some level cost function $g:[\ell]\to\R_+$, and a rounding set $\mathcal{Q}=\{1=i_1,\dots,i_m\}$ which satisfies conditions \cref{EQN:Gamma} and \cref{EQN:q}, \cref{ALG:composite} yields an $ABs$--approximation to the MLCTS problem, where $s=\max_k\{s(k)\}$.
\end{theorem}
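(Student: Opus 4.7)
The plan is to bound the algorithm's cost $\mathrm{ALG}$ against the MLAGS optimum $\OPT$ via two intermediate steps, corresponding to the constants $A$ and $B$, on top of the single-level approximation factor $s$. Let $G^*_1 \supseteq \cdots \supseteq G^*_\ell$ denote an optimal MLAGS solution.

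First I would observe that for each quantized level $i_k \in \mathcal{Q}$, the optimal MLAGS sparsifier $G^*_{i_k}$ is itself a feasible single-level admissible sparsifier for $T_{i_k}$. Hence if $H^*_{i_k}$ denotes the optimal single-level sparsifier for $T_{i_k}$, we have $w(H^*_{i_k}) \leq w(G^*_{i_k})$; combined with the $s(k)$-approximation guarantee this yields $w(H_{i_k}) \leq s \cdot w(G^*_{i_k})$ where $s = \max_k s(k)$.

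Next I would expand $\mathrm{ALG}$ and swap the order of summation. Since $G_i = H_{k(i)} \oplus \bigoplus_{j > i,\, j \in \mathcal{Q}} H_j$ is constant on each interval $i \in [i_k, i_{k+1}-1]$, a given sparsifier $H_{i_j}$ contributes its edges to $G_i$ for all $i \in [1, i_{j+1}-1]$ (with the convention $i_{m+1} = \ell + 1$). Bounding the union by the sum of weights and swapping the summation order, I would write $\mathrm{ALG} \leq \sum_j w(H_{i_j}) \cdot F_j$ for appropriate cumulative cost factors $F_j$, then apply condition \eqref{EQN:q} to collapse each $F_j$ into at most $B \cdot g(i_j)$. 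Substituting the single-level bound gives $\mathrm{ALG} \leq B s \sum_j g(i_j)\, w(G^*_{i_j})$.

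Finally, I would use condition \eqref{EQN:Gamma} to bound the quantized sum against $\OPT$. For each $j$, pick a distinct representative level $i^{(j)} \in (i_{j-1}, i_j]$ in the original grid (for example, $i^{(j)} = i_{j-1}+1$, with $i_0 := 0$); by \eqref{EQN:Gamma}, $g(i_j) = g(q(i^{(j)})) \leq A\, g(i^{(j)})$, and by nestedness $w(G^*_{i_j}) \leq w(G^*_{i^{(j)}})$. Summing over $j$ and using the fact that the representatives are distinct yields $\sum_j g(i_j)\, w(G^*_{i_j}) \leq A \cdot \OPT$. Combining the three inequalities gives $\mathrm{ALG} \leq ABs \cdot \OPT$.

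The main obstacle I anticipate is the cost accounting during the merge: edges shared across multiple $H_{i_j}$'s are over-counted under the crude union bound $w(G_i) \leq \sum_{j : H_{i_j}\,\text{in}\,G_i} w(H_{i_j})$, but condition \eqref{EQN:q} is engineered precisely to absorb this over-counting into the constant $B$, so the loose union bound suffices for the final ratio.
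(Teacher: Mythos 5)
Your proposal reuses the paper's three ingredients (the single-level factor $s$, condition \eqref{EQN:q} for $B$, and condition \eqref{EQN:Gamma} for $A$) but assembles them in the opposite direction: you expand the algorithm's cost and push it down to $\OPT$, whereas the paper pushes $\OPT$ up --- first rounding each edge of the optimal solution to its quantized level (giving $\OPT_\Q\le A\,\OPT$ via \eqref{EQN:Gamma}), then replacing each edge of grade $i_k$ in the rounded optimum by $k$ copies at grades $i_1,\dots,i_k$ (producing feasible single-level instances at the quantized levels of total cost at most $B\,\OPT_\Q$ via \eqref{EQN:q}), and only then invoking the $s$--approximation. Your step 1 (comparing $w(H_{i_k})$ against $w(G^*_{i_k})$, which is itself a feasible single-level sparsifier) and your step 3 (distinct representatives $i^{(j)}\in(i_{j-1},i_j]$ with $g(i_j)\le g(q(i^{(j)}))\le A\,g(i^{(j)})$ together with nestedness) are sound and are clean duals of the paper's steps.

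The gap is in step 2. Condition \eqref{EQN:q} controls $\sum_{k=1}^{j} g(i_k)$, a sum over the \emph{quantized} levels $i_1,\dots,i_j$ only. But $H_{i_j}$ is merged into $G_i$ for every level $i=1,\dots,i_{j+1}-1$, so under the cost of Problem~\ref{PROB} the cumulative factor you need is $F_j=\sum_{i=1}^{i_{j+1}-1} g(i)$ (and under the grade-of-service accounting used in Lemma~\ref{lemma:cmp} it is $g(i_{j+1}-1)$); neither quantity is bounded by $B\,g(i_j)$ via \eqref{EQN:q}. Concretely, with $g(i)=i$ and $\Q=\{1,\ell\}$ one gets $F_1=\binom{\ell}{2}$ while $B\,g(i_1)$ is about $1$. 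The role of \eqref{EQN:q} in the paper is different: it prices the duplication of an edge of the \emph{rounded optimum} across the quantized levels beneath it, where only levels of $\Q$ are ever charged. To repair your route you must either adopt that accounting (charge each edge once, at the quantized level at which its $H$ was computed, which is what the paper implicitly does) or add a separate hypothesis bounding the accumulated cost over all levels below $i_{j+1}$ in terms of $g(i_j)$. Finally, the obstacle you flag at the end --- over-counting of edges shared among several $H_{i_j}$ --- is the benign part, since the union bound only helps; the per-level accumulation above is where the argument actually needs the extra care.
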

\begin{proof}
Consider a candidate solution to the MLCTS problem using $\Q=\{1=i_1,\ldots,i_m\}$  where edges of grade of service $i$ with $i_k\leq i<i_{k+1}$ are assigned levels $q(i)=i_{k+1}-1\in\{i_2-1,i_3-1,\ldots,\ell\}$.  For any grade or service assignment function $y:E\to[\ell]$, \cref{EQN:Gamma} implies that the cost of the solution restricted to $\Q$ satisfies
\begin{equation}\label{EQN:CostQ}
    \sum_{e\in E}c(q(y(e)),e)  = \sum_{e\in E} g(q(y(e)))c(1,e) \leq A\sum_{e\in E}g(y(e))c(1,e) = A\sum_{e\in E} c(y(e),e).
\end{equation}
Let $\OPT_\Q$ denote the cost of the optimal solution to the MLCTS problem using $\Q$.  Minimizing the left-hand side of \cref{EQN:CostQ} over all choices of $y$ yields $\OPT_\Q\leq A\sum_{e\in E} c(y(e),e)$ for any $y$.  Thus minimizing the right-hand side of \cref{EQN:CostQ} over $y$ as well yields $\OPT_\Q\leq A\OPT$, where $\OPT$ is the cost of the optimal solution to the original MLCTS problem (i.e., not depending on the set $\Q$).  

Next we show that the output of \cref{ALG:composite} has cost bounded above by $Bs\OPT_\Q$, which combined with the previous analysis yields the desired approximation bound.  To see this, note that the output of algorithm gives an $s$--approximation to the minimum-cost solution to the single-level sketching problem on levels $i_1,\dots,i_m$.  If these costs are $C_{i_j}$, and $\MIN_{i_j}$ is the minimum cost of a sketch on level $i_j$ (according to cost $c(1,e)$), then $C_{i_j}\leq s\MIN_{i_j}$.  

Consider a level assignment $y:E\to[\ell]$ corresponding to a solution to the rounded problem whose cost is $\OPT_\Q$.  Then we have that
\[\OPT_\Q = \sum_{e\in E}g(y(e))c(1,e) = \sum_{k=1}^m g(i_k)\sum_{e\in E}\mathbbm{1}_{\{y(e) = i_k\}}c(1,e), \]
where $\mathbbm{1}_S$ denotes the indicator function of the set $S$. The cost of a solution given by \cref{ALG:composite} is at most $s\sum_{k=1}^m g(i_k)\MIN_{i_k}$.

Now, suppose we take the solution with cost $\OPT_\Q$ and duplicate an edge on level $i_k$ on all levels below it in $\Q$: $i_{k-1},\dots,i_1$.  The cost of this is then \[ \sum_{k=1}^m g(i_k)\sum_{e\in E}\mathbbm{1}_{\{y(e)\leq i_k\}}c(1,e),\]
which is at least \[ \sum_{k=1}^m g(i_k)\MIN_{i_k}\] since the former contains feasible solutions to the single-level sketch at each level $i_k$.  Putting it all together, using \cref{EQN:q} we have
\begin{align*}
 s\sum_{k=1}^mg(i_k)\MIN_{i_k} &\leq s\sum_{k=1}^m g(i_k)\sum_{e\in E}\mathbbm{1}_{\{y(e)\leq i_k\}}c(1,e)\\
 & = s\sum_{k=1}^m g(i_k) \sum_{j=1}^k \sum_{e\in E}\mathbbm{1}_{\{y(e)= i_j\}}c(1,e)\\
 & = s\sum_{j=1}^m \left[\sum_{k=1}^j g(i_k)\right]\sum_{e\in E}\mathbbm{1}_{\{y(e)=i_j\}}c(1,e)\\
  & \leq Bs\sum_{j=1}^m g(i_j)\sum_{e\in E}\mathbbm{1}_{\{y(e)=i_j\}}c(1,e)\\
  & = Bs\OPT_\Q.
\end{align*}
Therefore, the solution output by \cref{ALG:composite} has cost at most $Bs\OPT_\Q$ which is at most $ABs\OPT$.
\end{proof}

Note that the proof of \cref{THM:RoundingUp} is independent on the type of sketch desired, and hence the algorithm is quite flexible.  A similar approach was used to approximate minimum-cost multi-level Steiner trees in~\cite{MLST2018}, but the above analysis shows this approach also works for spanners and $k$--connected subgraphs, for example. 

\begin{remark}
Note that we have presented here the idea of upgrading, or rounding up \textit{solutions} of single-level sketches, but we could equally well cast the algorithm in terms of upgrading or rounding up \textit{terminals}, e.g., assigning a terminal rate $16$ rather than its original rate of $10$.  This is the notion used by Charikar et al.~\cite{KORTSARZ1994222}, and we simply remark here that if \cref{ALG:composite} is modified in this language, then the approximation guarantee of \cref{THM:RoundingUp} is the same as long as the definition of constants $A$ and $B$ are suitably modified.  In fact, the solutions produced will also be the same.
\end{remark}

\subsection{Examples}\label{SEC:RoundingExamples}

The level cost function $g$ could take many forms.  For simplicity of examples here, we take $g(i)=i$, i.e., a linear cost growth along the levels.

Several natural options suggest themselves for quantization functions. Following Charikar et al.~\cite{1288137}, we may take $\Q = \{1,2,\dots,2^{\ceiling{\log_2\ell}}\}$ and round each level up to the nearest power of 2.  In this case, using an oracle to compute the terminal spanner at each level yields a $4$--approximation to the MLCTS problem for a multiplicative $t$--spanner.  The same approximation holds in this case for multi-level Steiner trees~\cite{MLST2018}.  Indeed,  $2^{i+1}-1 \leq 2\cdot2^i$, whence we may choose $A=2$, and if edge $e$ gets its rate rounded to $2^j$, then
\[\sum_{j=1}^k i_j = \sum_{j=1}^k 2^j \leq 2\cdot2^k, \] whence $B=2$.  This yields the following corollary which improves on Theorems 1 and 3 in~\cite{MLGS_proceeding}, as the approximation ratio here is independent of the number of levels.

\begin{corollary}\label{COR:OracleRounding}
Let $\C$ be a fixed set of constraints, $\ell\in\N$, let $G(V,E)$ and $T_\ell\subset\dots\subset T_1\subset V$ be given, and let the rounding set be $\Q=\{1,2,\dots,2^{\ceiling{\log_2\ell}}\}$.  Assuming uniform edge costs with $g(i)=i$, \cref{ALG:composite} produces a $4s$--approximation to the MLCTS problem.  In particular, using an oracle as a subroutine produces a $4$--approximation.

Moreover, a $4$--approximation to the minimum-cost multi-level multiplicative graph spanner problem may be obtained by using an oracle to solve the single-level spanner problem.
\end{corollary}
Note that the use of an oracle such as the ILP given in \cite{MLGS_proceeding} is time-intensive; this issue will be addressed in the sequel.

It is of interest to note that choosing the rounding set to be $\Q=\{1,b,\dots,b^{\ceiling{\log_bi}}\}$ for some other base $b>0$ does not improve the approximation ratio.  In this case, one can show that $A=b$, and $B$ is approximately $\frac{b}{b-1}$, and thus the approximation ratio is $\frac{b^2}{b-1}s$ which is minimized when $b=2$.

Using a coarser quantizer instead yields a worse approximation.  Consider the coarsest quantizer which sets $\Q=\{1\}$, with $g(i)=i$ as before.  In this case, $A=\ell$, and $B=1$, which means that the best one can do is an $\ell s$--approximation.  This approach corresponds to the \textit{Bottom Up} approach described in~\cite{MLGS_proceeding}.

If no rounding is done (i.e., $\Q=[\ell]$), then one computes a sketch at each level and merges them together according to the operation $\oplus$ going down the levels.  This can be considered a \textit{Top Down} approach to the problem, and yields an upper bound on the approximation ratio of $\frac{\ell+1}{2}s$. 

Regard that the approximation ratios above for both the top down and bottom up approaches were shown to be tight in the case that an oracle is used on a subset of levels (the case $s=1$) \cite{MLGS_proceeding}.

\subsection{Approximation Bounds Given $\Q$}\label{SEC:MINBounds}

The previous subsections described a general framework for level quantization which can yield a good approximation to the multi-level sketching problem.  In particular, for a dyadic rounding scheme, we showed that \cref{ALG:composite} yields a $4$--approximation to the MLCTS problem with uniform edge costs regardless of the constraints $\C$.  In this subsection, we illustrate some other approximation bounds that can be obtained for a prescribed rounding set $\Q$. For all of the results stated in the remainder of this section, we assume that we are given a set of constraints $\C$, a graph $G(V,E)$, a set of nested terminals $T_\ell\subset\dots\subset T_1\subset V$, and that an oracle is used as a subroutine in \cref{ALG:composite}.  If an approximation algorithm is used for the single-level solver, a factor of $s$ as in \cref{THM:RoundingUp} should be added to any approximation bounds.

First, let us consider an upper bound on the cost of multi-level sketches based upon the minimum-cost solution to the single level sketching problems. To wit, let $\MIN_i$ denote the minimum cost of a sketch of $G$ over $T_i$ using costs $c(1,e)$ for each edge $e$; i.e., \[\MIN_i := \min_{G_i\in\Schwartz(G,T_i,\C)}\;\sum_{e\in E_i}c(1,e).\]  Note that the cost above does not depend on the number of levels, as we are only solving the single-level sketch problem over different terminal sets. As $T_{\ell} \subset \dots \subset T_1$, we have $\MIN_{\ell} \le \dots \le \MIN_1$.  Given a choice of rounding set $\Q$, denote by $\COST_\Q$ the cost of the output of \cref{ALG:composite}. Then the following holds.

\begin{lemma}\label{lemma:cmp}
For any rounding set $\mathcal{Q}=\{1=i_1,\dots,i_m\}\subset[\ell]$, using an oracle as a subroutine, \cref{ALG:composite} produces an output with cost \[\COST_\Q \le f(Q,\{\MIN_i\}):= \sum_{k=1}^m g(i_{k+1}-1)\MIN_{i_k},\] where $i_{m+1} :=\ell+1$.
\end{lemma}
\begin{proof}
The single level sketch on terminals of level $i_k$ is promoted to level $i_{k+1}-1$, i.e., edges are assigned grade of service $i_{k+1}-1$. Each subgraph $H_{i_k}$ then has cost $g(i_{k+1}-1)\MIN_{i_k}$ and is computed by a single-level oracle. The rest of the proof follows from the merging bound and the fact that $G_1\oplus G_2\subset G_1\cup G_2$. 

\end{proof}

The utility of the above approximation may be limited as it requires knowledge of the optimal cost of each single-level sketch.  However, we may use it to state a general approximation bound for any rounding set, independent of the cost of a single-level solution to \cref{PROB:ConstrainedSketch}.

\begin{lemma}\label{lemma:QappRatio}
For any rounding set $\mathcal{Q}=\{1=i_1,\dots,i_m\}$, using an oracle as a subroutine, \cref{ALG:composite} yields the approximation guarantee \[\COST_\Q\leq \min_{1\leq h\leq m}\frac{\sum_{k=1}^h g(i_{k+1} - 1)}{i_h}\;\OPT.\]
\end{lemma}

\begin{proof}
Applying \cref{lemma:cmp} for a particular $\mathcal{Q}$, we have
\begin{equation}\label{Eq: R2OPT}
    \frac{\COST_\Q}{\OPT}
    \le \frac{\sum_{k=1}^m g(i_{k+1} - 1)\MIN_{i_k}}{\sum_{i=1}^\ell \MIN_i},
\end{equation}
where we have used the observation that $\OPT\ge\sum_{i=1}^\ell \MIN_i.$ Therefore, the right-hand side of the above inequality provides an approximation guarantee assuming knowledge of $\MIN_1,\dots,\MIN_\ell$. We can also find a generic bound by looking at the worst case scenario for $\MIN_1,\dots,\MIN_\ell$. Without loss of generality, we may assume that $\sum_{i=1}^{\ell} \MIN_i = 1$, so that $\OPT \ge 1$. Since $g(i)$ is an increasing function, the worst case of level costs will be of the form $\MIN_1=\MIN_2=\cdots=\MIN_h=\frac{1}{h}$ and $\MIN_{h+1}=\cdots=\MIN_\ell=0$ for some $h$ (see, e.g., \cite[Lemma 2.4]{MLST2018}). Therefore, the general approximation guarantee (regardless of costs $\MIN_i$ of the subset sketches of each level) is
\[\min_{1\leq h\leq m}\frac{\sum_{k=1}^h g(i_{k+1} - 1)}{i_h}.\]
\end{proof}

Now we note that \cref{lemma:QappRatio} actually provides a better approximation guarantee than \cref{THM:RoundingUp} in some cases.
\begin{proposition}\label{PROP:QappRatio}
For any rounding set $\Q=\{1=i_1,\dots,i_m\}$, using an oracle as a subroutine, \cref{ALG:composite} yields the approximation guarantee
\[\COST_\Q \leq AB\min_{1\leq h\leq m}\frac{g(i_{h})}{i_h}\;\OPT. \]
\end{proposition}
\begin{proof}
Combine \cref{EQN:Gamma}, \cref{EQN:q}, and \cref{lemma:QappRatio}. 
\end{proof}

\subsection{Minimum-Cost Rounding Solution}\label{SEC:Composite}


The previous subsection discussed approximation guarantees when a rounding set $\Q$ is prescribed beforehand. 
To find the best possible approximation to the MLCTS problem via single-level solutions, we can run \cref{ALG:composite} for all $2^{\ell-1}$ subsets $\mathcal{Q}\subset[\ell]$ containing 1, and then pick the one with the lowest cost. This strategy yields an approximation algorithm with constant ratio. The approximation algorithm requires $\ell$ sketch computations followed by finding the minimum cost over $2^{\ell-1}$ solutions.  Consequently, this method is more costly to implement than simply rounding up solutions as before, but it provides the lowest cost and the best guarantees possible using single-level solvers.  In the terminology of \cite{MLST2018}, we call this the Composite algorithm.

\begin{algorithm}[H]
\caption{Composite($G,T_1,\dots,T_\ell,\C$)}\label{ALG:compositeBestQ}
\begin{algorithmic}
\For{all $\Q \subset [\ell]$ with $\ell \in \Q$}
   \State $G_1,\dots,G_\ell \gets$ MLCTS($G,T_1,\dots,T_\ell,\C,\Q$) via \cref{ALG:composite}
   \State $\COST_\Q=\COST(G_1,\dots,G_\ell)$
\EndFor
\State\Return $\argmin\;\COST_\Q$
\end{algorithmic}
\end{algorithm}

\begin{proposition}\label{Prop:compositeBestQRatio}
The following linear program (LP) yields an approximation guarantee of the composite algorithm (\cref{ALG:compositeBestQ}):

\begin{align*}
    \max_{\{t,y_1,\dots,y_\ell\}} t\\
    \text{ subject to } \hspace{5ex}
    &t<\sum_{k=1}^m g(i_{k+1} - 1)y_{i_k},\text{ for all }\Q=\{1=i_1,\dots,i_m\}\subset[\ell]\\
    &\sum_{i=1}^{\ell} y_i = 1, y_\ell\leq\dots\leq y_1
\end{align*}
\end{proposition}

\begin{proof}
The composite \cref{ALG:compositeBestQ}\ picks the smallest cost returned from \cref{ALG:composite} over all possible rounding sets $\Q$. A general approximation for \cref{ALG:compositeBestQ} follows from:
$$\frac{\min\COST_\Q}{\OPT}\leq\frac{\min_Q f(\Q,\{\MIN_i\})}{\sum_i^\ell \MIN_i}\leq\max_{\{y_i\},\sum y_i=1,y_i\leq y_{i+1}}\min_\Q f(\Q,\{y_i\}).$$

The linear program in \cref{Prop:compositeBestQRatio} solves the optimization problem on the right-hand side of the above inequality. 
\end{proof}

This optimization problem suggested above is similar to that in~\cite{MLST2018}.  In particular, for linear costs ($g(i)=i$) and $\ell\leq 100$, the solution returned by the composite algorithm has cost no worse than $2.351\, \OPT$ assuming an oracle that computes sketches optimally. 

\subsection{Finding the Best Rounding Set}\label{SEC:BestComposite}

Suppose having computed all single-level solutions, we are interested to find what $\Q$ would provide the best multi-level solution without computing the cost for all $2^{\ell-1}$ admissible subsets of $[\ell]$ as in \cref{ALG:compositeBestQ}.
To do so, we can use the inequality of $\COST_Q\leq f(Q,\{\MIN_i\}):=\sum_{k=1}^m g(i_{k+1}-1)\MIN_{i_k}$, and instead find the subset $\Q^*$ that minimizes $f(Q,\{\MIN_i\})$ and then feed this newly found $\Q^*$ to \cref{ALG:composite}. Interestingly, even though most likely $\COST_{\Q^*}> \min_\Q\COST_\Q$ (i.e., the cost returned from \cref{ALG:composite} using $\Q^*$ is not necessarily the lowest possible cost), using $\Q^*$ in \cref{ALG:composite} yields the same approximation guarantee as \cref{Prop:compositeBestQRatio}.  This fact is recorded below in \cref{PROP:SameGuarantee}.

To state this algorithm formally, we first formulate a minimization problem as follows. Define binary variables $\theta_{ij}$ associated with a given $\Q$ such that $\theta_{i_k i_{k+1}}=1$ and $0$ otherwise. For example if $\ell=3$ and $\Q=\{1,3\}$, $\theta_{13}=\theta_{34}=1$ and $\theta_{12}=\theta_{14}=\theta_{23}=\theta_{24}=0$.

\begin{proposition}\label{Lemma: Qnew}
 Given a vector $\mathbf{y}=[y_1,\dots,y_{\ell}]$, the choice of $\Q^*=\{i:\theta_{ij}=1\}$ from the following ILP minimizes $f(\Q,\{y_i\})=\sum_{k=1}^{m} g(i_{k+1}-1)y_{i_k}$, where $i_k$ is the $k-$th smallest element of $\Q^*$ and $i_{m+1}=\ell+1$.

\begin{align*}
    \min_{\{\theta_{ij}\}} \sum_{i=1}^{\ell}\sum_{j=i+1}^{\ell+1} g(j-1)\theta_{ij}y_i\\
    \text{ subject to } \hspace{4ex}
     \sum_{j>i} \theta_{ij} &\le 1,  \quad \sum_{i<j} \theta_{ij} \le 1, & &  i \in [\ell],j \in \{2,\dots,\ell+1\}\\
     \sum_{i<k} \theta_{ik} &=  \sum_{j>k} \theta_{kj} & &  k\in\{2,\dots,\ell\}\\
     \sum_{1<j} \theta_{1j} &= 1, \quad
     \sum_{i<\ell+1}\theta_{i(\ell+1)}= 1  \\
\end{align*}

\end{proposition}

\begin{proof}
Using the indicator variables, the objective function can be expressed as\\ $\sum_{i=1}^{\ell}\sum_{j=i+1}^{\ell+1} g(j-1)\theta_{ij}y_i$ because $\theta_{i_k,i_{k+1}}=1$ and the other $\theta_{ij}$'s are zero. In the above formulation, the first constraint indicates that for every given $i$ or $j$, at most one $\theta_{ij}$ is equal to one. The second constraint indicates that for a given $k$, if $\theta_{ik}=1$ for some $i$, then there is also a $j$ such that $\theta_{kj}=1$. In other words, the result determines a proper choice of levels by ensuring continuity of $[i,j]$ intervals. The final constraint guarantees that $1\in\Q^*$.
\end{proof}

The above lemma suggests the following algorithm for computing an approximation to the minimum-cost multi-level sketch by first choosing the best rounding set from the previous proposition.

\begin{algorithm}[H]
\caption{Best Rounding Algorithm($G,T_1,\dots,T_\ell,\C$)}\label{ALG:compositeQstar}
\begin{algorithmic}
\For{$i=1,\dots,\ell$}
   \State $\MIN_i \gets$ cost of optimal $\C$--terminal sketch of $G$ over $T_i$ using an oracle
\EndFor
\State Setting $\mathbf{y}=[\MIN_1,\dots,\MIN_\ell]$ find $Q^*$ from \cref{Lemma: Qnew}.
\State\Return $G_1,\dots,G_\ell =$ MLCTS($G,T_1,\dots,T_\ell,\mathcal{C},Q^*)$ from \cref{ALG:composite}
\end{algorithmic}
\end{algorithm}

\begin{proposition}\label{PROP:SameGuarantee}
Let $\COST_{\Q^*}$ be the cost of the output of \cref{ALG:compositeQstar}, $\OPT$ be the cost of the optimal solution to the MLCTS problem, and $t_\ell$ be the approximation guarantee of the minimum-cost \cref{ALG:compositeBestQ} according to \cref{Prop:compositeBestQRatio}.  Then,
\[\COST_{\Q^*}\leq t_\ell\;\OPT.\]
\end{proposition}
\begin{proof}
Combining \cref{Prop:compositeBestQRatio} with \cref{lemma:cmp}, we have \[\frac{\COST_{\Q^*}}{\OPT}\leq\frac{\min_Q f(\Q,\{\MIN_i\})}{\sum_i^\ell \MIN_i}\leq t_\ell.\]
\end{proof}

Note that while only $\ell$ oracle queries are used in both \cref{ALG:compositeBestQ,ALG:compositeQstar}, the former requires computing the cost of a multi-level sketch for $2^{\ell-1}$ rounding sets, while the latter requires the solution to the ILP described in \cref{Lemma: Qnew}. 






\section{Single Level Approximation Algorithms}\label{SEC:Corollaries}

Solving a single level ILP takes much less time than solving a multi-level ILP, especially as the number of levels increases. This fact was one of the motivations behind the rounding algorithm. If a single-level ILP is used as an oracle subroutine, then a constant approximation ratio is obtained, but at the expense of the subroutine being exponential time.  

It is natural to consider what happens when an approximate algorithm for single-level sketch problem is used as the approximation algorithm in \cref{ALG:composite} since \cref{THM:RoundingUp} implies this still gives a good approximation to the multi-level sketching problem.

\subsection{Steiner Trees}

Results for Steiner trees are well-known. A very simple $2$--approximation is due to \cite{Gilbert1968} and the current best approximation algorithm guarantees an approximation ratio of $\rho\approx 1.39$ \cite{Byrka2013}. However, the Steiner tree problem is APX--hard \cite{Bern1989}, and it is NP--hard to approximate the problem within a factor of $96/95$~\cite{Chlebnik2008}.

\subsection{Spanners}\label{SEC:Spanners}

Unfortunately, there are few algorithms that provably approximate the subsetwise multiplicative spanner problem for generic graphs. Here, we give a new and simple algorithm for computing a subsetwise spanner with general distortion function $f$, which is among the first results of its kind to our knowledge. We will use an $f$--spanner subroutine in this algorithm which must work for weighted graphs. Note that most spanner algorithms for weighted graphs are usually of the multiplicative type. The idea is to apply an $f$--spanner subroutine on the metric closure of the graph over $T$ defined below.

\begin{definition}[Terminal Metric Closure]
Given a graph $G(V,E)$ and a set of terminals $T\subset V$, the \emph{terminal metric closure of $G$ over $T$} is defined as the complete weighted graph on $|T|$ vertices with the weight of edge $(u,v)$ given by $d_G(u,v)$.
\end{definition}

\Cref{ALG:subsetwise} describes a subsetwise spanner construction that provides a novel approximation guarantee.

\begin{algorithm}[H]\caption{Terminal (subsetwise) $f$--Spanner($G,T,f$)}\label{ALG:subsetwise}
\begin{algorithmic}
\State $ \tilde{G} \gets $ metric closure of $G$ over $T$
\State $\tilde{G}' \gets$ spanner of $\tilde{G}$ with distortion $f$
\State $E' \gets$ edges in $G$ corresponding to $\tilde{G}'$
\State \Return $E'$
\end{algorithmic}
\end{algorithm}

Note that the spanner obtained in the second step can come from any known algorithm which computes a spanner for a weighted graph, which adds an element of flexibility.  This also allows us to obtain some new results on the weight of multiplicative subsetwise spanners.  First let us note that \cref{ALG:subsetwise} yields a subsetwise spanner.

\begin{lemma}
Given $G(V,E)$, $T\subset V$, and a distortion function $f$, \cref{ALG:subsetwise} yields a subsetwise spanner $G'$ of $G$ over $T$ with distortion $f$.
\end{lemma}
\begin{proof}
Let $u,v\in T$.  Since $\tilde{G}'$ is a spanner of $\tilde{G}$ with distortion $f$, we necessarily have $d_{\tilde{G}'}(u,v) \le f(d_{\tilde{G}}(u,v)) = f(d_G(u,v))$. Since each edge in $\tilde{G}'$ induces a shortest path in $G'$, we have $d_{G'}(u,v) \le d_{\tilde{G}'}(u,v)\leq f(d_G(u,v))$ as desired.
\end{proof}

It is known (see~\cite{Alstrup2017}) that if $k$ is a positive integer and $\eps>0$, then one can construct a $(2k-1)(1+\eps)$--spanner for a given graph (over all vertices, not a subset) in $O(n^{2+\frac1k + \eps})$ time, which has weight $O(n^\frac1kW(\MST(G))$, where $\MST(G)$ is a minimum spanning tree of $G$.  Using this in line 2 of \cref{ALG:subsetwise}, we can conclude the following.

\begin{theorem}\label{THM:subsetwise1}
Let $G(V,E)$, $k\in\N$, $\eps>0$, and $T\subset V$ be given.  Using the spanner construction of~\cite{Alstrup2017} as a subroutine, \cref{ALG:subsetwise} yields a subsetwise multiplicative $(2k-1)(1+\eps)$--spanner $G'(V',E')$ of $G$ in $O(|T|^{2+\frac1k+\eps})$ time with total weight
 \[W(G')= O(|T|^\frac1k)  W(\ST(G,T)).\]
Moreover,\[W(G') = O(|T|^\frac1k) \OPT.\]
\end{theorem}
\begin{proof}
Let $\tilde{G}$ be the metric closure of $G$ over $T$, and let $G^*(V^*,E^*)$ be the minimum weight subsetwise $f$--spanner for $G$, where $f$ is any distortion function.
Note that $G^*$ must contain a tree, $G^*_0$, which spans $T$, whence
\begin{equation}\label{EQ:STBound} W(G^*) \geq W(G^*_0)\geq W(\ST(G,T))\end{equation}
by definition of Steiner trees.
From the approximation result for Steiner trees (see~\cite{bang2004}) we have 
\begin{equation}\label{EQ:MSTBound1} W(\MST(\tilde{G}))\leq 2\,W(\ST(G,T)).\end{equation}
It follows that $W(\MST(\tilde{G}))\leq 2 W(G^*)=2\,\OPT$.
Now, using the results of~\cite{Alstrup2017} for the special case where $f(x)=tx$ and $t=(2k-1)(1+\eps)$, we can construct a $(2k-1)(1+\eps)$--spanner $\tilde{G}'$ of $\tilde{G}$ which satisfies
\begin{equation}\label{EQ:MSTBound2} W(\tilde{G}') =  O(|T|^\frac1k)\, W(\MST(\tilde{G})).
\end{equation}
Combining these we have the desired estimate
\[ W(\tilde{G}')\leq O(|T|^\frac1k)\;W(G') = O(|T|^\frac1k)\OPT.\]
Finally, $W(G')= O(|T|^\frac1k)  W(ST(G,T))$ follows from \cref{EQ:MSTBound1}.
\end{proof}

Both bounds given in \cref{THM:subsetwise1} are interesting for different reasons.  The first stated bound shows that \cref{ALG:subsetwise} yields an $O(|T|^\frac1k)$--approximation to the optimal solution. 
Dinitz et al.~\cite{dinitz2016label} proved the super-logarithmic hardness for the basic $k$--spanner problem by showing that for every $k \geq 3$ and every constant $\epsilon >0$ it is hard to approximate within a factor better than $2^{\log^{1-\epsilon}n/k}$. That implies that there does not exist any algorithm that can provide a better approximation ratio than \cref{ALG:subsetwise} when $k \geq 3$ unless $NP \subseteq BPTIME(2^{polylog(n)})$.

The second bound gives a better notion of \textit{lightness} of a subsetwise spanner. Typically, lightness of a spanner over all vertices is defined to be $W(G')/W(\MST(G))$, where MST$(G)$ is the minimum spanning tree of $G$.  The notion of lightness for subsetwise spanners suggested by \cref{THM:subsetwise1} is $W(G')/W(\ST(G))$.  When $T=V$,  the minimum spanning tree and the Steiner tree over $V$ are the same, and hence these notions coincide.  However, for $T\subsetneq V$ it is not generally true that $\MST(G)=\ST(G,T)$, but by definition the Steiner tree has smaller weight.  Thus, this notion of lightness for subsetwise spanners stated in terms of the weight of the Steiner tree is more natural.  Klein~\cite{Klein06} uses this notion of lightness for subsetwise spanners of planar graphs, but the results presented here are the first for general graphs. 

The following gives a precise (as opposed to asymptotic) bound for a subsetwise spanner by utilizing the greedy algorithm of Alth\"{o}fer et al.~\cite{Alth90}.

\begin{theorem}\label{THM:subsetwise2}
Let $G(V,E)$, $t>0$, and $T\subset V$ be given.  Using the greedy spanner algorithm as a subroutine, \cref{ALG:subsetwise} yields a subsetwise $(2t+1)$--spanner  $G'(V',E')$ of $G$ in $O(|T|^{3+\frac{1}{t+1}})$ time with total weight
\[W(G') \leq \left(2+\frac{|T|}{t}\right)\OPT.\] Moreover, \[W(G')\leq \left(2+\frac{|T|}{t}\right)\, W(\ST(G,T)).\]
\end{theorem}
\begin{proof}
The greedy algorithm takes $O(mn^{1+\frac{1}{t+1}})$ time, and the rest is similar to the proof of previous theorem.
\end{proof}

Note that in the greedy spanner algorithm of Alth\"{o}fer et~al.~\cite{Alth90}, the minimum spanning tree is a subgraph of the solution. However, for our spanner construction, the minimum Steiner tree might not necessarily be a subset of the final solution. Nevertheless, the produced subsetwise spanner will include a Steiner tree with cost at most twice the optimal one. 

We can try to apply this approach to other multiplicative algorithms available in the literature. For example, using the $O(m)$ time algorithm described in~\cite{baswana2008streaming}, \cref{ALG:subsetwise} yields a subsetwise spanner with running time $O(|T|^2)$, however, the edge size of the output is not given relative to the minimum spanning tree and hence not applicable to our analysis. 

\section{Multi-Level Approximation Algorithms}\label{SEC:MultiLevelCorollaries}

In this section, we illustrate how the subroutines mentioned above can be used in \cref{ALG:composite}, and show several corollaries of the kinds of guarantees one can obtain in this manner. In particular, we give the first weight bounds for multi-level graph spanners.

The case of Steiner trees was discussed at length in \cite{MLST2018} and so is omitted here.

\subsection{Spanners}

If the input graph is planar then we can use the algorithm provided by Klein~\cite{Klein06} to compute a subsetwise spanner for the set of levels we get from the rounding up algorithm. The polynomial time algorithm  in~\cite{Klein06} has constant approximation ratio, assuming that the stretch factor is constant. Hence, we have the following corollary.

\begin{corollary}\label{COR:PlanarWithoutOracleRounding}
Let $G(V,E)$ be a weighted planar graph with $T_{\ell} \subset \dots \subset T_1 \subset V$ given. Let $\eps>0$, and let $A,B$  be as in \cref{EQN:Gamma,EQN:q} assuming uniform edge costs.  Using the planar subsetwise spanner construction of \cite{Klein06} as a subroutine, \cref{ALG:composite} yields an $ABO(\eps^{-4})$--approximation to the MLCTS problem when the constraint requires computing a multiplicative $(1+\eps)$--spanner.  The algorithm runs in $O(\frac{|T_1|\log |T_1|}{\eps})$ time.
\end{corollary}
The proof of this corollary follows from combining the guarantee of Klein with the bound of \cref{THM:RoundingUp}.

Using the approximation algorithm for subsetwise spanners for arbitrary graphs of the previous section, we obtain the following.

\begin{corollary}\label{COR:MLMultSpanner}
Let $G(V,E)$, $k\in\N$, $\eps>0$, and $T_\ell\subset\dots\subset T_1\subset V$ be given, and let $A,B$ be as in \cref{EQN:Gamma,EQN:q} assuming uniform edge costs. Using \cref{ALG:subsetwise} with the spanner construction of \cite{Alstrup2017} as a subroutine, \cref{ALG:composite} yields an $ABO(|T_1|^\frac1k)$--approximation to the MLCTS problem when the constraint requires computing a multiplicative $(2k-1)(1+\eps)$--spanner.  The algorithm runs in $O(|T_1|^{2+\frac1k+\eps})$ time.
\end{corollary}

\begin{proof}
By \cref{THM:subsetwise1}, \cref{ALG:subsetwise} is a $O(|T_j|^\frac1k)$--approximation on any single level, and hence combining this with \cref{THM:RoundingUp}, the conclusion follows.
\end{proof}

For additive spanners, there are algorithms to compute subsetwise spanners of size $O(n|T|^\frac{2}{3})$, $\tilde{O}(n|T|^\frac{4}{7})$ and $O(n|T|^\frac{1}{2})$ for additive stretch 2, 4 and 6, respectively~\cite{Abboud16,Kavitha2017}. Similarly, there is an algorithm to compute a subsetwise $(1+\eps,4)$--spanner of size $O(n\sqrt{\frac{|T|\log n}{\eps}})$~\cite{Kavitha2017}. If we use these algorithms as subroutines in \cref{ALG:composite} to compute subsetwise spanners for different levels, then we have the following corollaries.

\begin{corollary}\label{COR:AdditiveWithoutOracleRounding}
Let $G(V,E)$ be an unweighted graph ($d(e) = 1$ for all $e \in E$) and let $T_{\ell} \subset \dots \subset T_1 \subset V$ be given, and assume uniform edge costs. Then there exist polynomial time algorithms to compute multi-level graph spanners with additive stretch 2, 4 and 6, of size $O(n|T_1|^\frac{2}{3})$, $\tilde{O}(n|T_1|^\frac{4}{7})$, and $O(n|T_1|^\frac{1}{2})$, respectively.
\end{corollary}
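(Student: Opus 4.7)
The plan is to invoke Algorithm \ref{ALG:composite} with the trivial rounding set $\Q=\{1,2,\ldots,\ell\}$ (no rounding is performed), and to plug in the corresponding known polynomial-time subsetwise additive spanner algorithms as the single-level subroutine that produces $H_i$. Specifically, I would use the algorithm of Abboud et al.~\cite{Abboud16} for additive stretch $2$, and the algorithms of Kavitha~\cite{Kavitha2017} for additive stretches $4$ and $6$, giving $|H_i|$ equal to $O(n|T_i|^{2/3})$, $\tilde{O}(n|T_i|^{4/7})$, and $O(n|T_i|^{1/2})$ respectively. Each subroutine runs in polynomial time, and summing over the $\ell$ levels still yields a polynomial-time procedure.

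Since $\Q$ contains every level, the output at level $i$ is $G_i = \bigoplus_{j\geq i} H_j$, which for spanners is just the union of edge sets and therefore contains $H_i$; hence $G_i$ is a subsetwise additive spanner over $T_i$ with the prescribed stretch, and the nesting $G_\ell \subseteq \cdots \subseteq G_1$ is immediate from the construction. Thus the only remaining task is to bound the total size of the multi-level spanner, which is at most $\sum_{i=1}^\ell |H_i|$.

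To bound this sum, I would substitute $|T_i| = |T_1|/2^{i-1}$ into the single-level size bound. For any constant $c \in \{2/3,\, 4/7,\, 1/2\}$,
\[
\sum_{i=1}^\ell n|T_i|^{c} \;=\; n|T_1|^{c}\sum_{i=1}^\ell 2^{-c(i-1)} \;\leq\; \frac{n|T_1|^{c}}{1-2^{-c}}.
\]
Because $c$ is a positive constant in all three cases, the prefactor $1/(1-2^{-c})$ is a constant independent of $\ell$. Plugging in $c=2/3,\, 4/7,\, 1/2$ yields the three claimed bounds, with the $\tilde{O}(\cdot)$ in the stretch-$4$ case inherited directly from Kavitha's bound.

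The proof has no genuine obstacle; the one thing that makes it work is the exponentially decreasing hypothesis, which turns the level-wise size sum into a convergent geometric series and removes any dependence on the number of levels $\ell$. If instead the terminal sets decreased sub-geometrically, the sum would accumulate a factor depending on $\ell$ and the approach would no longer give a uniform polynomial bound in $|T_1|$ alone; this is the only place where the hypothesis is used.
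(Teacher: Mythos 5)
Your proof is correct and matches the paper's intended argument: the paper states this corollary without an explicit proof, deriving it directly from plugging the cited subsetwise additive spanner algorithms of Abboud et al.\ and Kavitha into Algorithm~\ref{ALG:composite} and exploiting the geometric decay of $|T_i|$ exactly as you do. One small accounting note: if the ``size'' of the multi-level spanner is the level-summed cost $\sum_{i}|E(G_i)|$ as in Problem~\ref{PROB}, then each $H_j$ is counted $j$ times and the relevant sum is $\sum_{j} j\, n|T_j|^{c} = n|T_1|^{c}\sum_{j} j\,2^{-c(j-1)}$ rather than your $\sum_j |H_j|$, but this series still converges to a constant for each $c\in\{2/3,4/7,1/2\}$, so your conclusion is unaffected.
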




\begin{corollary}\label{COR:AlphaBetaWithoutOracleRounding}
Let $G(V,E)$ be an unweighted graph and let $T_{\ell} \subset \dots \subset T_1 \subset V$ and $\eps>0$ be given. Then there exists a polynomial time algorithm to compute multi-level $(1+\eps,4)$--spanners of size $O(n\sqrt{\frac{|T_1|\log n}{\eps}})$.
\end{corollary}

\subsection{$k$--Connected Subgraphs}
Yet another example of constrained terminal sketches are $k$--connected subgraphs~\cite{5438613,laekhanukit2011improved,nutov2012approximating}, in which (similar to the ST problem) a set $T\subset V$ of terminals is given, and the goal is to find the minimum-cost subgraph such that each pair of terminals is connected with at least $k$ vertex-disjoint paths. In this case, the merging operator is $\oplus=\cup$ as in the case of spanners. Nutov~\cite{5438613} presents an approximation algorithm for this problem giving approximation ratio $O(k^2 \log k)$. Laekhanukit~\cite{laekhanukit2011improved} improves the approximation guarantee to $O(k \log k)$  if $|T| \geq k^2$ and shows that the hardest instances of the problem are when $|T| \approx k$. Nutov~\cite{nutov2012approximating} studies the subset $k$--connectivity augmentation problem where given a graph $G$ and a $(k-1)$--connected subgraph $H$, we want to augment some edges to $H$ to make it $k$--connected. The objective is to minimize the size of the set of augmented edges.  If we use the algorithm of~\cite{laekhanukit2011improved} in \cref{ALG:composite} to compute subsetwise $k$--connected subgraphs for different levels, then we have following corollary.

\begin{corollary}\label{COR:kConnectedWithoutOracleRounding}
Let $G(V,E)$ be a graph which is to be sketched on $\ell$ levels with $\ell$ nested terminal sets $T_{\ell} \subset \dots \subset T_1 \subset V$. Then using the algorithm of \cite{laekhanukit2011improved} as a subroutine in \cref{ALG:composite} yields a polynomial time algorithm which computes a multi-level $k$--connected subgraph over the terimals with approximation ratio $O(k \log k)$ provided $|T_1| \geq k^2$.
\end{corollary}

\section{Numerical Experiments}\label{SEC:Experiments}

To evaluate the performance of different variants of 
the composite algorithm we use several experiments. This requires optimal single-level solvers for 
the composite algorithm, which we obtain using an ILP formulation of the problem given in \cite{MLGS_proceeding}. 

\subsection{Setup}
We use the Erd\H{o}s--R\'{e}nyi~\cite{erdos1959random} model to generate random graphs. 
Given a number of vertices, $n$, and probability $p$, the model $\textsc{ER}(n,p)$ assigns an edge to any given pair of vertices with probability $p$. 
An instance of $\textsc{ER}(n,p)$ with $p=(1+\varepsilon)\frac{\ln n}{n}$ is connected with high probability for $\varepsilon>0$~\cite{erdos1959random}).  For our experiments we allow $n$ to range from 5 to 300, and set $\varepsilon = 1$.

For experimentation, we consider only the multiplicative graph spanner version of the MLCTS problem (that is, $\C$ constrains that $G_i$ satisfies $d_{G_i}(u,v) \le td_G(u,v)$ for $u,v \in T_i$) with uniform edge costs based on edge distance ($c(i,e) = i \cdot d(e)$), hence we abbreviate this as MLGS; for similar experimental results on multi-level Steiner trees, see~\cite{MLST2018}. An instance of the MLGS problem is characterized by four parameters: the graph generator, the number of vertices $|V|$, the number of levels $\ell$, and stretch factor $t$ (e.g., ER, $|V| = 80$, $\ell=3$, $t=2$).  As there is randomness involved, we generated 3 instances for every choice of parameters.

We generated MLGS instances with 1 to 6 levels ($\ell \in [6]$), where terminals are selected on each level by randomly sampling $\lfloor {|V| \cdot (\ell-i+1)}/(\ell+1) \rfloor$ vertices on level $i$ so that the size of the terminal sets decreases linearly. As the terminal sets are nested, $T_i$ can be selected by sampling from $T_{i-1}$ (or from $V$ if $i=1$).
We used four different stretch factors in our experiments, $t \in \{1.2, 1.4, 2, 4\}$. Edge weights are randomly selected from $\{1,2,3,\dots,10\}$. 

\subsection{Algorithms and Outputs}

We implemented several variants of \cref{ALG:composite}, which yield different results based on the rounding set $\Q$ as well as the single-level approximation algorithm. In our experiment we used three setups for $\Q$: bottom-up (BU) in which $\Q=\{1\}$, top-down (TD) in which $\Q=\{1,\dots,\ell\}$, and composite (CMP) which selects the optimal set of levels $\Q^*$ as in \Cref{Sec:SingleLevel} (particularly, \cref{ALG:compositeBestQ}).  In the implementation of TD, we add an additional pruning step as we merge the spanners going down the levels; this allows for sparser spanners at each stage.

We used Python 3.5, and utilized the same high-performance computer for all experiments (Lenovo NeXtScale nx360 M5 system with 400 nodes).  When using an oracle for single levels in \cref{ALG:composite}, we use the ILP formulation provided in~\cite{MLGS_proceeding} using CPLEX 12.6.2. 

For each instance of the MLGS problem, we compute the costs of the MLGS returned using the BU, TD, CMP approaches, and also compute the minimum-cost MLGS using the ILP in~\cite{MLGS_proceeding}. For the first set of experiments, we use the ILP as an oracle to find the minimum-cost spanner for each level; in this case, we refer to the results as Oracle BU, TD, and CMP.  In the second set of experiments, we use the metric closure subsetwise spanner \cref{ALG:subsetwise} as the single-level subroutine, which we refer to as Metric Closure BU, TD, and CMP.  We show the performance ratio for each heuristic in the $y$--axis (defined as the heuristic cost divided by OPT), and how the ratio depends on the input parameters (number of vertices $|V|$, number of levels $\ell$, and stretch factors $t$).

Finally, we discuss the running time of the algorithms. All box plots show the minimum, interquartile range and maximum, aggregated over all instances using the parameter being compared.

\subsection{Results}
\label{section:results}

\Cref{oracle_box} shows the impact of different parameters on the results of Oracle TD, BU, and CMP. We see that all variants perform better when the size of the vertex set $|V|$ increases, whereas all variants perform worse as the number of levels $\ell$ increases.  In general, performance decreases as the stretch factor $t$ increases.

\begin{figure}[htp]
    \centering
    \begin{subfigure}[b]{0.28\textwidth}
        \includegraphics[width=\textwidth]{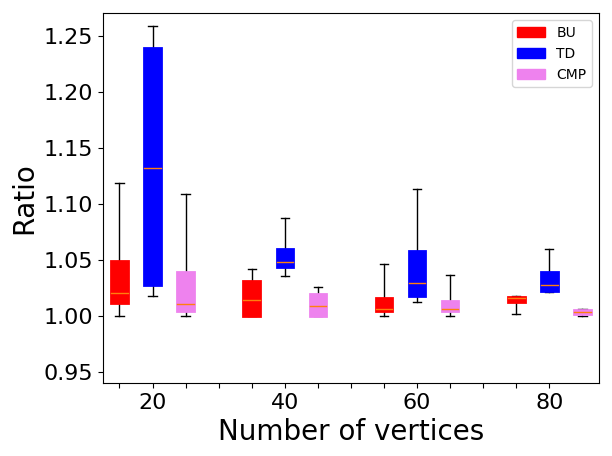}
    \end{subfigure}
    ~
    \begin{subfigure}[b]{0.28\textwidth}
        \includegraphics[width=\textwidth]{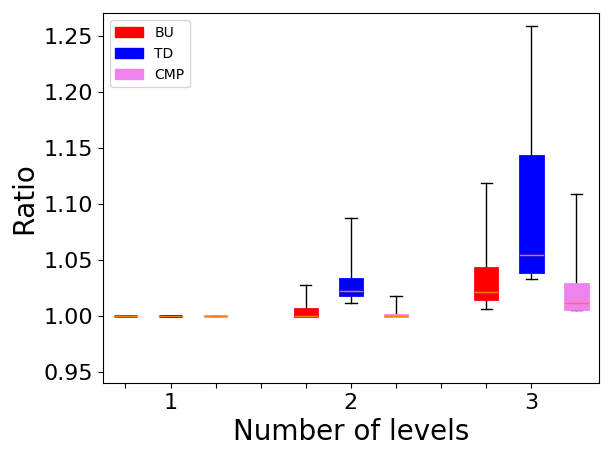}
    \end{subfigure}
    ~
    \begin{subfigure}[b]{0.28\textwidth}
        \includegraphics[width=\textwidth]{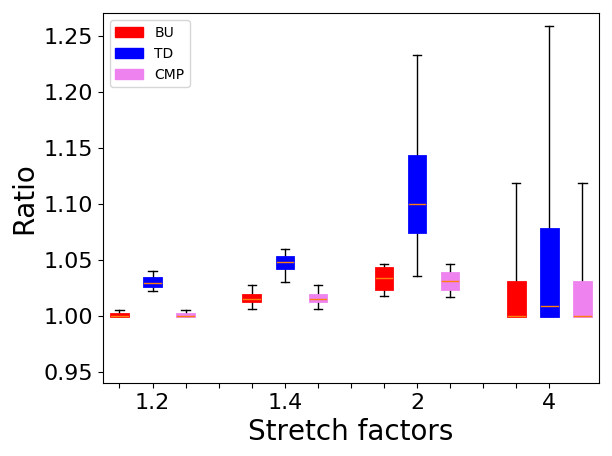}
    \end{subfigure}
    \caption{Performance of Oracle BU, TD and CMP algorithms on Erd\H{o}s--R{\'e}nyi graphs w.r.t.\ the number of vertices, the number of levels, and the stretch factors.} \label{oracle_box}
\end{figure}


The most time-consuming part of the experiment is the execution of the ILP for solving single-level MLGS instances optimally. Hence, we first show the running times of the exact solution of the MLGS instances in \cref{time_box} with respect to the number of vertices $|V|$, number of levels $\ell$, and stretch factors $t$. For all parameters, the running time tends to increase as the size of the parameter increases. 
In particular, the running time with stretch factor 4 (\cref{time_box}, right) was much worse than for lower stretch factors. We can reduce the size of the ILP by removing some constraints based on different techniques discussed in~\cite{MLGS_proceeding}. However, these size reduction techniques are less effective as the stretch factor increases. We show the running times of computing Oracle BU, TD, and CMP solutions in \cref{time_box_heu}. Notice that although the running time of CMP should be worse, sometimes TD takes more time. The reason is that we have an additional edge-pruning step after computing a single-level subsetwise spanner. In TD, every level has this pruning step, which causes additional computation time and affects the runtime adversely when the graph is large.

\begin{figure}[htp]
    \centering
    \begin{subfigure}[b]{0.28\textwidth}
        \includegraphics[width=\textwidth]{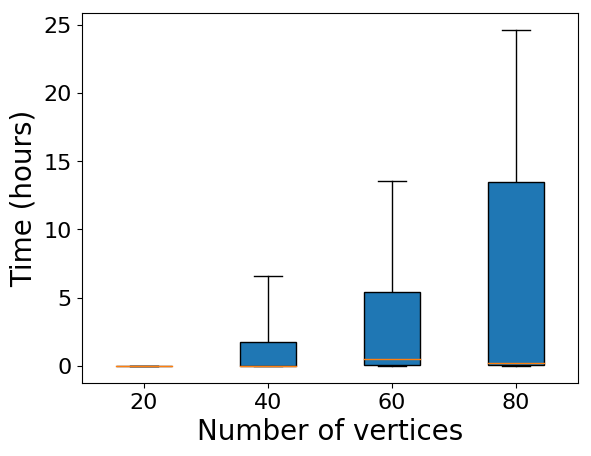}
    \end{subfigure}
    ~
    \begin{subfigure}[b]{0.28\textwidth}
        \includegraphics[width=\textwidth]{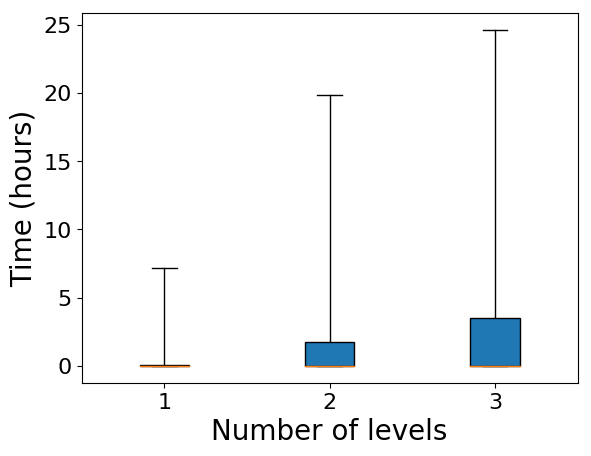}
    \end{subfigure}
    ~
    \begin{subfigure}[b]{0.28\textwidth}
        \includegraphics[width=\textwidth]{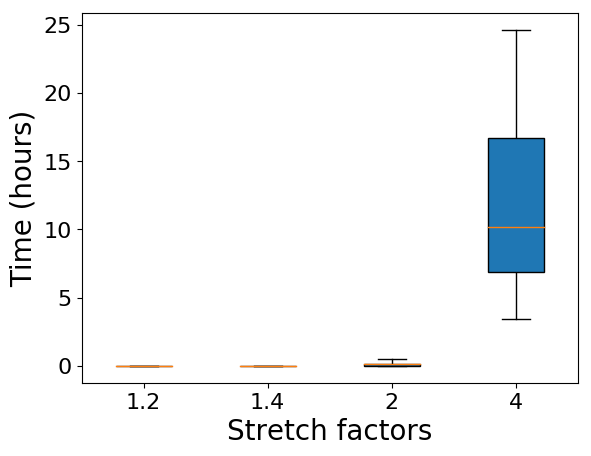}
    \end{subfigure}
    \caption{Experimental running times for computing minimum-cost single-level subsetwise spanners via an ILP w.r.t.\ the number of vertices, the number of levels, and the stretch factors.} \label{time_box}
\end{figure}

\begin{figure}[htp]
    \centering
    \begin{subfigure}[b]{0.28\textwidth}
        \includegraphics[width=\textwidth]{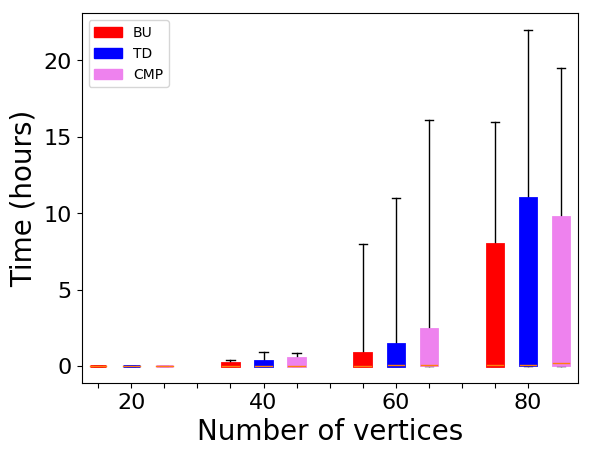}
    \end{subfigure}
    ~
    \begin{subfigure}[b]{0.28\textwidth}
        \includegraphics[width=\textwidth]{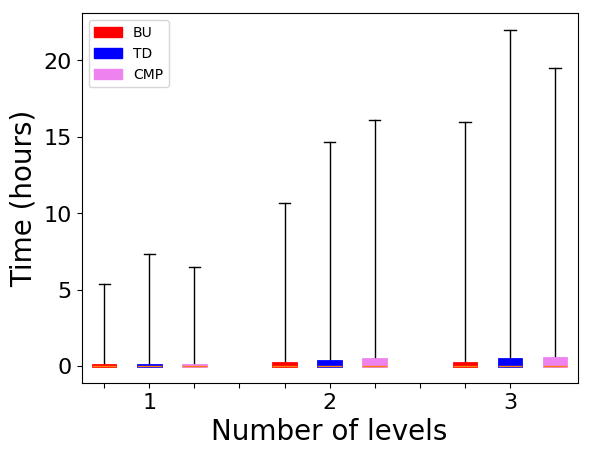}
    \end{subfigure}
    ~
    \begin{subfigure}[b]{0.28\textwidth}
        \includegraphics[width=\textwidth]{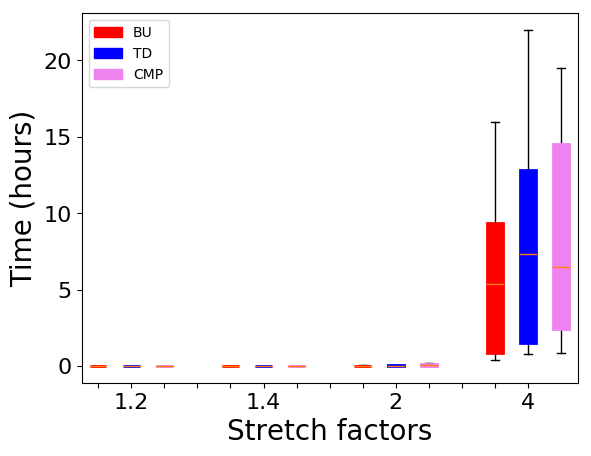}
    \end{subfigure}
    \caption{Experimental running times for computing Oracle BU, TD and CMP solutions w.r.t.\ the number of vertices, the number of levels, and the stretch factors.} \label{time_box_heu}
\end{figure}


The ILP is too computationally expensive for larger input sizes, which is where the approximation algorithm can be particularly useful. We now consider a similar experiment using the metric closure algorithm (\cref{ALG:subsetwise}) to compute subsetwise spanners as described in \Cref{SEC:Spanners}. We show the impact of different parameters in \cref{approx_box}, and one can see that the approximation algorithm performs very well in practice.

\begin{figure}[htp]
    \centering
    \begin{subfigure}[b]{0.28\textwidth}
        \includegraphics[width=\textwidth]{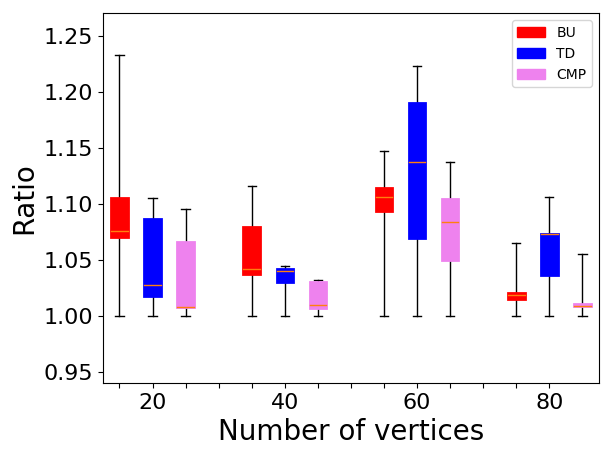}
    \end{subfigure}
    ~
    \begin{subfigure}[b]{0.28\textwidth}
        \includegraphics[width=\textwidth]{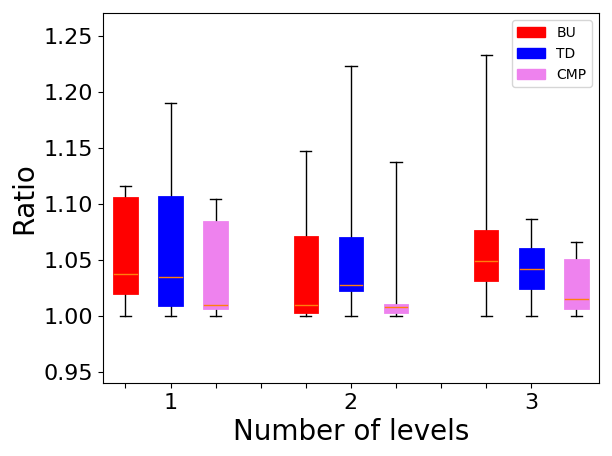}
    \end{subfigure}
    ~
    \begin{subfigure}[b]{0.28\textwidth}
        \includegraphics[width=\textwidth]{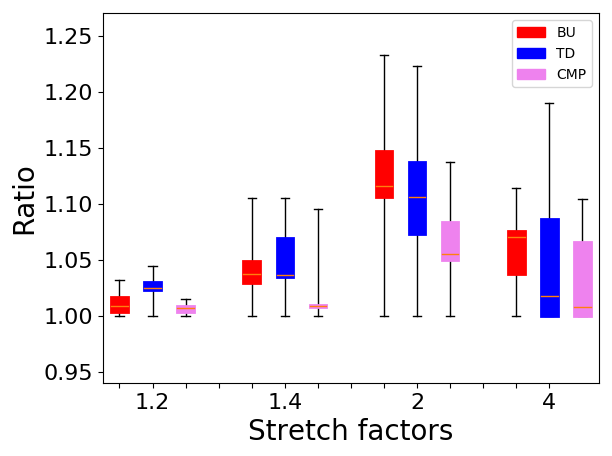}
    \end{subfigure}
    \caption{Performance of the Metric Closure BU, TD and CMP algorithms on Erd\H{o}s--R{\'e}nyi graphs w.r.t.\ the number of vertices, the number of levels, and the stretch factors.} \label{approx_box}
\end{figure}


Our final experiments test the heuristic performance on a set of larger graphs. We generated the graphs using the Erd\H{o}s--R{\'e}nyi model, with $|V| \in \{100, 200, 300\}$, $\ell \in \{2, 4, 6\}$, and with stretch factors $t \in \{1.2, 1.4, 2, 4\}$. Here, the ratio is determined by dividing the BU, TD and CMP cost by $\min\{$BU, TD, CMP$\}$ (as computing the optimal MLGS would be too time consuming). \Cref{consistent_approx_box_ER} shows the performance of the Metric Closure BU, TD and CMP algorithms with respect to $|V|, \ell$ and $t$. \Cref{consistent_time_approx_box_ER} shows the aggregated running times per instance, which significantly worsen as $|V|$ increases. The results indicate that while running times increase with larger input graphs, the number of levels and the stretch factors seem to have little impact on performance. Notably, when the metric closure algorithm is used in place of the ILP for the single-level solver (\cref{consistent_time_approx_box_ER}), the running times decrease for larger stretch factors.   

\begin{figure}[htp]
    \centering
    \begin{subfigure}[b]{0.28\textwidth}
        \includegraphics[width=\textwidth]{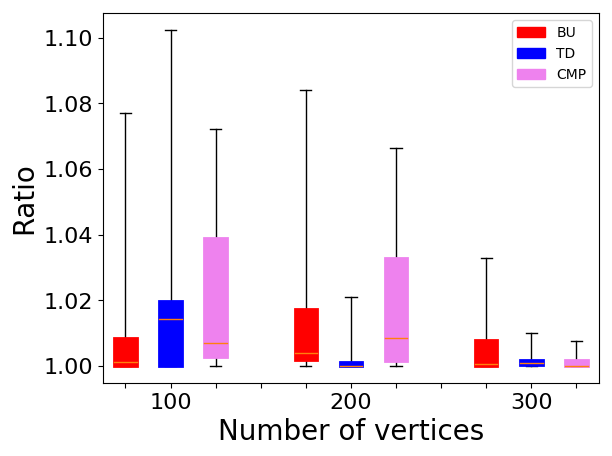}
    \end{subfigure}
    ~
    \begin{subfigure}[b]{0.28\textwidth}
        \includegraphics[width=\textwidth]{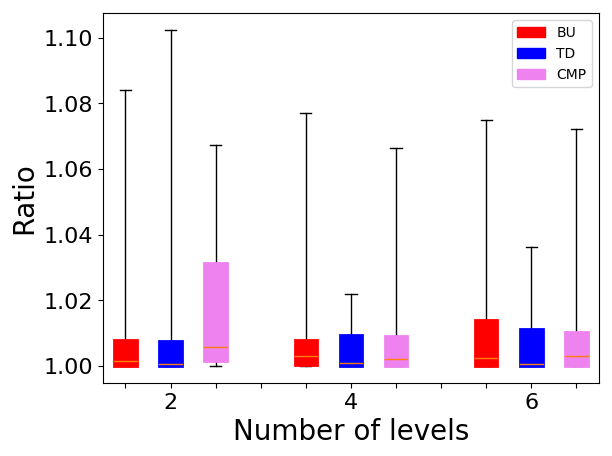}
    \end{subfigure}
    ~
    \begin{subfigure}[b]{0.28\textwidth}
        \includegraphics[width=\textwidth]{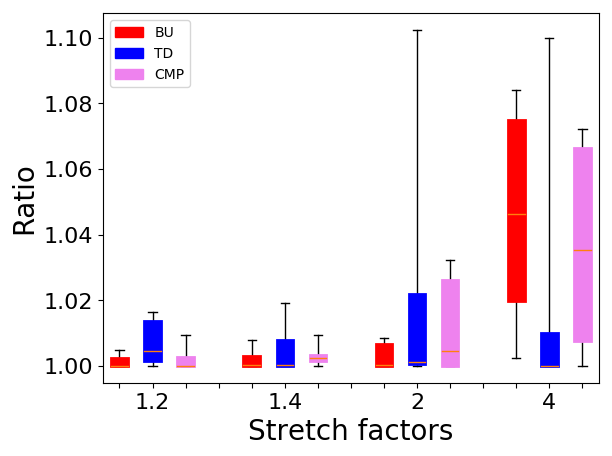}
    \end{subfigure}
    \caption{Performance of Metric Closure bottom-up, top-down and composite on large Erd\H{o}s--R{\'e}nyi graphs w.r.t.\ the number of vertices, the number of levels, and the stretch factors. The ratio is determined by dividing the objective value by min\{BU, TD, CMP\}.} \label{consistent_approx_box_ER}
\end{figure}

\begin{figure}[htp]
    \centering
    \begin{subfigure}[b]{0.28\textwidth}
        \includegraphics[width=\textwidth]{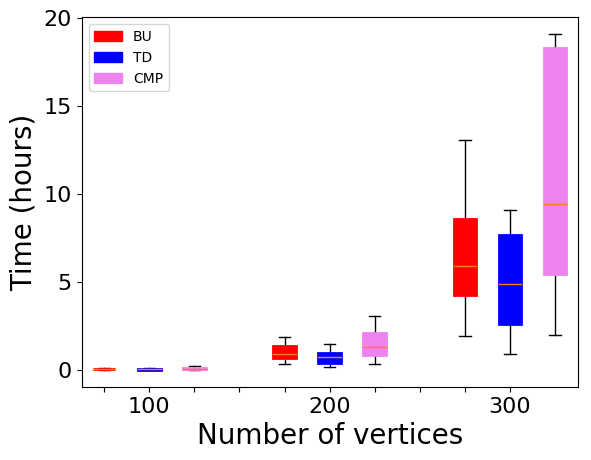}
    \end{subfigure}
    ~
    \begin{subfigure}[b]{0.28\textwidth}
        \includegraphics[width=\textwidth]{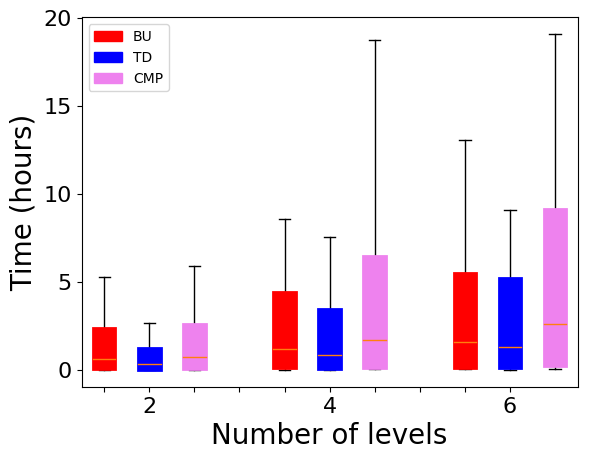}
    \end{subfigure}
    ~
    \begin{subfigure}[b]{0.28\textwidth}
        \includegraphics[width=\textwidth]{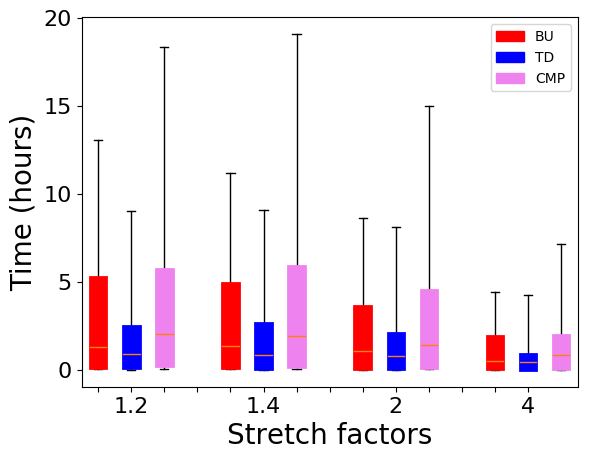}
    \end{subfigure}
    \caption{Experimental running times for computing Metric Closure BU, TD and CMP solutions on large Erd\H{o}s--R{\'e}nyi graphs w.r.t.\ the number of vertices, the number of levels, and the stretch factors.} \label{consistent_time_approx_box_ER}
\end{figure}

\section{Conclusion}\label{SEC:Conclusion}

We have given a general framework for solving multi-level graph sketch problems using a single level solver module. Assuming uniform edge costs, when an oracle is used for the single level module, our algorithm can yield a constant approximation to the optimal multi-level solution that is independent of the number of levels. Using a single-level approximation algorithm as a subroutine, we derive an approximation algorithm for computing multi-level graph spanners which depends on the size of the terminal set but not the number of levels. We also provided the first bounds on the size of subset-wise graph spanners with lightness expressed with respect to the weight of the corresponding Steiner tree. As a result, we showed that the size of multi-level spanners is essentially dominated by the size of the terminal set $T_1$ at the lowest level.

For the future, it would be interesting to look for multi-level algorithms that do not rely on single-level solvers, but which build the solution simultaneously on all levels.


\section*{Acknowledgements}

The research for this paper was partially supported by NSF grants CCF-1740858 and DMS-1839274.

\bibliographystyle{siamplain}
\bibliography{references}


\end{document}